\newtheorem{theorem}{Theorem}
\newtheorem{definition}{Definition}
\setlist[itemize]{noitemsep, topsep=4pt}
\title{Leveraging History for Faster Sampling of Online Social Networks}
\author{
\alignauthor Zhuojie Zhou\\
       \affaddr{George Washington University}\\
       \email{rexzhou@gwu.edu}
\alignauthor Nan Zhang\\
       \affaddr{George Washington University}\\
       \email{nzhang10@gwu.edu}
\alignauthor Gautam Das\\
       \affaddr{University of Texas at Arlington}\\
       \email{gdas@uta.edu}
}
\begin{document}
\maketitle

\begin{abstract}
With a vast amount of data available on online social networks, how to enable efficient analytics has been an increasingly important research problem. Many existing studies resort to sampling techniques that draw random nodes from an online social network through its restrictive web/API interface. While almost all of these techniques use the exact same underlying technique of {\em random walk} - a Markov Chain Monte Carlo based method that iteratively transits from one node to its random neighbor.

Random walk fits naturally with this problem because, for most online social networks, the only query we can issue through the interface is to retrieve the neighbors of a given node (i.e., no access to the full graph topology). A problem with random walks, however, is the ``burn-in'' period which requires a large number of transitions/queries before the sampling distribution converges to a stationary value that enables the drawing of samples in a statistically valid manner.

In this paper, we consider a novel problem of speeding up the fundamental design of random walks (i.e., reducing the number of queries it requires) {\em without} changing the stationary distribution it achieves - thereby enabling a more efficient ``drop-in'' replacement for existing sampling-based analytics techniques over online social networks. Technically, our main idea is to leverage the history of random walks to construct a higher-ordered Markov chain. We develop two algorithms, {\em Circulated Neighbors} and {\em Groupby Neighbors} Random Walk (CNRW and GNRW) and rigidly prove that, no matter what the social network topology is, CNRW and GNRW offer better efficiency than baseline random walks while achieving the same stationary distribution. We demonstrate through extensive experiments on real-world social networks and synthetic graphs the superiority of our techniques over the existing ones.
\end{abstract}

\section{Introduction}
\subsection{Motivation}

With the broad penetration of online social networks and the multitude of information they capture, how to enable a {\em third party}\footnote{i.e., one who is not the social network owner - examples include sociologists, economists, etc.} to perform efficient analytics of data available on social networks - specifically, to answer global and conditional aggregates such as SUM, AVG, and COUNT (e.g., the average friend count of all users living in Texas) - has become an increasing important research problem in the database community \cite{navlakha2008graph, 13275946, thirumuruganathan2014aggregate}. Applications of such aggregate estimations range from sociology research, understanding economic indicators to observing public health trends, bringing benefits to the entire society.

Technically, the challenge of data analytics over online social networks mainly stems from the limitations of their (web) query interfaces available to a third party - most online social networks only allow {\em local neighborhood queries}, with input being a node (i.e., user) and output being its immediate neighbors. Such a query interface makes retrieving the entire graph topology prohibitively expensive for a third party (in terms of query cost - due to the large size of real-world social networks). To address the problem, existing studies resort to a sampling approach, specifically the sampling of nodes (i.e., users) through the restrictive interface of a social network \cite{zhang2014exploration, zhou2013faster}, to enable aggregate estimations based on the sampled nodes. This is also the approach we focus on in the paper.

\subsection{Existing Techniques and Their Problems}

While there have been a wide variety of social network sampling designs \cite{Leskovec2006a, Kurant, 58571882, Gjoka2010} proposed in the literature, the vast majority of them share the same core design: {\em random walk}, a Markov Chain Monte Carlo (MCMC) method which iteratively transits from one node to a random neighbor. The variations differ in their design of the {transition probabilities}, i.e., the probability distributions for choosing the random neighbor, which are naturally adjusted for the various types of analytical tasks they target. Nonetheless, the core random walk design remains the same - after performing the random walk for a number of steps, the current node is taken as a sample - and, after repeated executions of random walks, we can generate (statistically accurate) aggregate estimation from the multiple sample nodes.

To understand the problem with this core design, it is important to observe the key bottleneck of its practical implementations. Note that such a sampling design incurs two types of overhead: One is query cost: Each transition in the random walk requires one {\em local neighborhood query} (described above) to be issued to the online social network, while almost all real-world social networks enforce rigid query-rate limits (e.g., Twitter allows only 15 local neighborhood queries every 15 minutes). The second type is the local processing (time and space) overhead for recording sampled nodes and computing aggregate estimations. One can see that the bottleneck here is clearly the query cost - compared with the extremely low query rate allowed by online social networks (e.g., 1 minute/query for Twitter), the local processing overhead (linear to the sample size \cite{li2014random}) is negligible. 

With the understanding that a sampling algorithm must minimize its query cost, the problem with existing techniques can be summarized in one sentence: The core random walk design requires a long ``burn-in'' before a node can be taken as a sample - since each step requires one query, this leads to a high query cost and therefore a very slow sampling process over real-world online social networks.

To understand what the burn-in period is and why it is required, we note that a sample node can only be used for analytics if we know the {\em bias} of the sampling process, i.e., the probability for a node to be taken as a sample. The reason for this requirement is simple - only with knowledge of the sampling bias can we properly correct it to ensure equal representation of all applicable tuples. Nonetheless, actually learning the sampling distribution without knowledge of the global graph topology is difficult. A desirable property of random walk is that it asymptotically converges (as it grows longer) to a {\em stationary} sampling distribution that can be derived without knowledge of the graph topology (e.g., with probability proportional to a node's degree for simple random walk \cite{spitzer1964principles}, or uniform distribution for Metropolis-Hastings random walk \cite{spitzer1964principles}). The number of steps required for a random walk to reach this stationary distribution is the ``burn-in'' period which, unfortunately, is often quite long for real-world social networks \cite{zhou2013faster, Gjoka2010}.

\subsection{Our Idea: History-Aware Random Walks}

The focus of this paper is to offer a {\em ``drop-in'' replacement} for this core design (of random walk), such that existing sampling-based analytics techniques over online social networks, no matter which analytics tasks they support or graph topologies they target, can have a better efficiency without changing other parts of their design. To do so, we shorten the burn-in period (i.e., reduce the query cost) of the fundamental random-walk design {\em without} changing the stationary distribution it achieves - ensuring the transparency of this change to how the core design is called upon in data analytics.

Technically, our key idea here is motivated by an observation on the potential waste of queries caused by the existing random walk design: Most existing random walk based techniques are Markov Chain Monte Carlo (MCMC) methods that are {\em memoryless} - i.e., they do not take into account the historic nodes encountered in a random walk in the design of future transitions. These methods, while simple, waste substantial chances for leveraging historic queries to speed up random walks - a waste we aim to eliminate with our proposed ideas.

Our main idea developed in the paper is to introduce historic dependency to the design of random walks over graphs. Specifically, we consider the nodes already visited by the current random walk while deciding where to go for the next step. We start with developing a simple algorithm called Circulated Neighbors Random Walk (CNRW). The difference between CNRW and the traditional random walk is sampling with and without replacement when deciding on the next move. Specifically, consider a random walk with the last move being $u \to v$. To determine the next move, the traditional random walk design is to sample uniformly at random from the neighbors of $v$. With CNRW, this sampling is done {\em with replacement} - in other words, if the random walk has transmitted through $u \to v \to w$ before, then we exclude $w$ from being considered for the next move, until the random walk has passed through $u \to v \to x$ for all neighbors $x$ of $v$. We prove that, while CNRW shares the exact same stationary distribution (see Definition~\ref{def:sta}) as the traditional simple random walk, it is provably more (or equally) efficient than the traditional random walk no matter what the underlying topology is.

A rationale behind the design of CNRW can be explained with the following simple example. Note that if a random walk over a large graph comes back to $u \to v$ after just passing though $u \to v \to w$, it is likely an indication that $v \to w$ leads to a small component that is not well connected to the rest of the graph (as otherwise the probability of going back to $u$ should be very small). Falling into such a small component ``trap'' is undesirable for a random walk, which needs to ``spread'' to the entire graph as quickly as possible. Thus, an intuitive idea to improve the efficiency of a random walk is to avoid following $v \to w$ again, so as to increase the chance of following the other (hopefully better-for-random-walk) edges associated with $v$. This intuition leads to the ``circulated'', without-replacement, design of CNRW\footnote{One might wonder why the circulation is conditioned upon traveling through an edge (i.e., $u \to v$) again rather than traveling through a vertex (say $v$) again - this is indeed a subtle point which we further address in Section 3.2.}.

Based on the idea of CNRW, we develop GroupBy Neighbors Random Walk (GNRW), which further improves the performance of a random walk by considering not only the nodes visited by a random walk, but also the observed attribute values of these visited nodes. To understand the key idea of GNRW, consider the following intuitive observation on the structure of an online social network: users with similar attribute values (e.g., age, interests, occupation) are more likely to be connected with each other. Leveraging this observation, to decide which neighbor of $v$ to transit to from $u \to v$, GNRW partitions all neighbors of $v$ into a number of strata according to their values on an attribute of interest (often the measure attribute to be aggregated - see discussions in Section~\ref{subsec:gnrw-idea}). Then, GNRW first selects a stratum uniformly at random without replacement - i.e., it ``circulates'' among the strata until selecting each stratum once. Then, GNRW chooses a neighbor of $v$ from the selected stratum, again uniformly at random without replacement. This neighbor then becomes the next step.

Similar to the theoretical analysis of CNRW, we also prove that GNRW shares the same stationary distribution as the traditional random walk. Intuitively, the design of GNRW aims to ``speed up'' the random walk by alternating between different attribute values faster, instead of getting ``stuck'' on a small component of the graph sharing the same (or similar) attribute value. We shall demonstrate through experimental evaluation that GNRW can significantly improve the efficiency of random walks, especially when the ultimate objective of these random walks is to support aggregate estimations on attributes used for stratification in GNRW.

\subsection{Summary of Contributions}

We summarize the main contributions as follows.

\begin{itemize}
\item We consider a novel problem of leveraging historic transitions to speed up random walks over online social networks.
\item We develop Algorithm CNRW which features a key change from traditional random walks: instead of selecting the next transition by sampling with replacement from all neighbors of the current nodes, CNRW performs the sampling {\em without} replacement, thus becoming a history-dependent random walk.
\item We develop Algorithm GNRW which further improves the efficiency of random walk by leveraging not only the history of nodes visited by a random walk, but also the observed attribute values of these visited nodes.
\item We present theoretical analysis which shows that, while CNRW and GNRW both produce samples of the exact same distribution as the traditional random walk, they are provably more efficient no matter what the underlying graph topology is.
\item Our contribution also includes extensive experiments over real-world online social networks which demonstrate the superiority of CNRW and GNRW over traditional random walk techniques.
\end{itemize}
\subsection{Paper Organization}

This paper is organized as follows. Section~\ref{sec:pre} describes preliminaries of random walks. Section~\ref{sec:cnrw} and section~\ref{sec:gnrw} introduce Circulated Neighbors Random Walk (CNRW) and Groupby Neighbors Random Walk (GNRW). Section~\ref{sec:discussion} is about two discussions of our algorithms. Section~\ref{sec:exp} shows the results of our experiments. Section~\ref{sec:ref} overviews the related works. Section~\ref{sec:con} provides our brief conclusion.
\section{Preliminaries}
\label{sec:pre}
In this section, we introduce preliminaries that are important for the technical development of this paper. Specifically, we start with introducing the access model for online social networks, followed by a discussion of the most popular method of sampling an online social network - random walks. Specifically, we shall introduce two important concepts related to random walks, {\em order} and {\em stationary distribution} of a Markov chain, which are critical for the development of our techniques later in the paper. Finally, at the end of this section, we define the key performance measures for a random walk based sampling algorithm.

\subsection{Access Model for Online Social Networks}
\label{subsec:model}

As a third party with no direct access to the backend data repository of an online social network, the only access channels we have over the data is the web and/or API interface provided by the online social network. While the design of such interfaces varies across different real-world online social networks, almost all of them support queries that take any user ID $u$ as input and return two types of information about $u$:
\begin{itemize}
\item $N(u)$, the set of all neighbors of $u$, and
\item all other attributes of $u$ (e.g., user self-description, profile, posts).
\end{itemize}
Note that the definition of a ``neighbor'' may differ for different types of social networks - e.g., Twitter distinguishes between followers and followees and thus features directed connections and asymmetric neighborhoods (i.e., a neighbor of $u$ may not have $u$ in its neighbor's list), while Google Plus features undirected ``friendship'' edges and thus symmetric neighborhoods. For the purpose of this paper, we consider undirected edges - i.e., $\forall v \in N(u)$, there is $u \in N(v)$. Note that online social networks that feature directed edges can be ``casted'' into this undirected definition. For example, we can define an undirected edge $e_{uv}$ if either edge $u\rightarrow v$ or $v\rightarrow u$ exists. Given the definition of node and neighbors, we consider the social network topology as an undirected graph $G(V,E)$. $V$ is the set of all the users (i.e. vertices/nodes) and $E$ is the set of all the connections between two users. $E=\{e_{uv}|u, v \in V\}$, where $e_{uv}$ is the connection between user $u$ and user $v$. Also, we use $k_v$ to denote the {\em degree} of node $v$, where $k_v = |N(v)|$. 

Many real-world online social networks also impose a {\em query rate limitation}. For example, Twitter recently update their API rate limits to ``15 calls every 15 minutes\footnote{https://dev.twitter.com/rest/public/rate-limiting}.'' Yelp offers only ``25,000 API calls per day\footnote{http://www.yelp.com/developers/faq}''.

\subsection{Random walk}
\label{subsec:randomwalk}

Almost all-existing works on sampling an online social network through its restrictive web/API interface (as modeled above) are variations of the random walk techniques. We briefly discuss the general concept of a random walk and a specific instance, the simple random walk, respectively as follows.

\subsubsection{Random Walk as an MCMC Process}

\vspace{1mm}
\noindent{\bf Order:} Intuitively, a random walk on an online social network randomly transits from one node to another according to a pre-determined randomized transition algorithm (and the neighborhood/node information it has retrieved through the restrictive interface of the online social network). From a mathematical standpoint, a random walk can be considered a Markov Chain Monte Carlo (MCMC) process with its {\em state} $X_i$ being the node visited by the random walk at Step $i$ and transition probability distribution
\begin{align*}
{} &\Pr(X_n=x_n\mid X_{n-1}=x_{n-1}, \dots , X_1=x_1) \\
=  &\Pr(X_n=x_n\mid X_{n-1}=x_{n-1}, \dots, X_{n-m}=x_{n-m})
\end{align*}
where $x_i \in V$. Here $m$ ($1 \leq m < n$) is the {\em order} of the Markov Chain. Most existing random walk techniques over online social networks are first-ordered (i.e., with $m = 1$). That is, with these random walks, the next transition only depends upon the current node being visited, and is independent of the previous history of the random walk. For higher-order random walks, the next transition is determined by not only the current node, but also an additional $m - 1$ steps into the history as well (e.g., the non-backtracking simple random walk (NB-SRW) \cite{58571882} has an order of $m = 2$, as it avoids transitioning back to the immediate last node whenever possible). Note that the random walk technique we will present in the paper has a much higher order than these existing techniques - more details in Section~\ref{sec:cnrw} and Section~\ref{sec:gnrw}.

\vspace{1mm}
\noindent{\bf Stationary Distribution:} The premise of using random walk to generate statistically valid samples is that, after performing a random walk for a sufficient number of steps, the probability for the random walk to land on each node converges to a {\em stationary distribution} $\pi$. In other words, after sufficient number of steps - $T$ steps, if the random walk has stationary distribution, then the probability of staying at node $u$ after $T$ steps is the same as after $T+1$ steps. The convergence has been proved to hold as long as the transition is {\em irreducible} and {\em positive recurrent} \cite{Neal2004}. Usually a Simple Random Walk (defined in section \ref{subsubsec:srw}) on a connected non-bipartite graph has a stationary distribution.

\begin{definition}\label{def:sta} {\bf [Stationary distribution].}
Let $\pi_j$ denote the {\em long run proportion of time that a random walk spends on node $j$} \cite{Neal2004}:
\begin{equation}
\pi_j = \lim_{n\rightarrow \infty}\frac{1}{n}\sum_{m=1}^nPr\{X_m=j\},
\label{eq:sta-dist}
\end{equation}
\end{definition}

One can see that the stationary distribution of a random walk determines how we can deal with the samples, and it enables us to further analyze the samples like estimating certain aggregates. 

\subsubsection{Simple Random Walk}
\label{subsubsec:srw}
In the current literature of sampling online social networks through their restrictive interfaces, Simple Random Walk (SRW) is one of the most popular techniques being used. Simple random walk is an order-1 Markov chain with transition distribution being uniform on all nodes in the neighborhood of the current node $x_i$. Formally, we have the following definition.

\begin{definition} {\bf [Simple Random Walk].}
Given graph $G(V,E)$, and a node $v\in V$, a random walk is called Simple Random Walk if it chooses uniformly at random a neighboring node $u\in N(v)$ and transit to $u$ in the next step.
\begin{equation}
    P_{vu}=\left\{
    \begin{array}{ll}
        1/k_v\,\,\,&\text{if $u \in N(v)$,}\\
        0\,\,\,&\text{otherwise.}
    \end{array}
    \right.
\end{equation}
\end{definition}

Corresponding to this transition design, one can easily compute the stationary distribution for simple random walk as
\begin{equation}
\pi_v = \frac{k_v}{2|E|}
\end{equation}

That is, SRW selects each node in the graph with probability proportional to its degree. In later part of the paper, we shall show how our proposed technique achieves the same sampling distribution while significantly improving the efficiency of random walks.

\subsection{Performance Measures for Sampling}
\label{subsec:measurements}

There are two important objectives for sampling from an online social network: minimizing the sampling bias and minimizing the query cost. We define the two performance measures corresponding to the two objectives respectively as follows.

\vspace{1mm}
\noindent{\bf Sampling Bias:} Intuitively, sampling bias measures the ``distance'' between the stationary distribution of a random walk and the actual sampling distribution achieved by the actual execution of the random walk algorithm. To understand why this is an important measure, we note that there is an inherent tradeoff between such a distance and the query cost required for sampling: If one does not care about the sampling bias and opts to stop a random walk right where it starts, the sampling bias will be the distance between the stationary distribution and a probability distribution vector of the form $\{1, 0, \ldots, 0\}$ (i.e., while the starting node has probability 100\% to be sampled, the other nodes have 0\%) - i.e., an extremely large sampling bias.

We propose to measure sampling bias with three metrics for various purposes in this paper: KL-divergence \cite{dodge2003oxford}, $\ell_2$-distance \cite{dodge2003oxford}, and a ``golden measure'' of the error of aggregate estimations produced by the sample nodes. Specifically, while the former two measures can be used in theoretical analysis and experimental analysis on small graphs, they are infeasible to compute for large graphs used in our experimental studies - motivating us to use the last measure as well. We shall further discuss the bias and the accuracy of various random walk algorithms in the experiments section \ref{sec:exp}.

\vspace{1mm}
\noindent{\bf Query Cost.} Another key performance measure for sampling over an online social network is the {\em query cost} - i.e., the number of queries (as defined in the access model described in Section~\ref{subsec:model}) one has to issue in order to obtain a sample node. Note that query cost here is defined as the number of {\em unique queries} required, as any duplicate query can be immediately retrieved from local cache without consuming the aforementioned query rate limit enforced by the online social network.

{\bf Asymptotic Variance.} Variance is one of the most important things to justify the performance of random walk algorithms. Of course, we can measure the sampling bias given certain query cost, but after sufficient number of steps, the efficiency of an estimator provided by random walk samples is directly tied to the variance, which is also important for us to theoretically compare the performance of random walks. For the purpose of this paper, we adopt the commonly used asymptotic variance in Markov Chain.
\begin{definition} {\bf [Asymptotic Variance].}
Given a function $f(\cdot)$ and an estimator $\hat{\mu}_n = \frac{1}{n}\sum_{t=1}^nf(X_t)$, the asymptotic variance of the estimator is defined as
\begin{equation}
V_{\infty}(\hat{\mu}) = \lim_{n\rightarrow \infty}n\text{Var}(\hat{\mu}_n)
\end{equation}
\end{definition}
Note that this does not depend on the initial distribution for $X_0$. In addition, in practice, we would use an estimator based on only $X_t$ with $t$ greater than some very large number - say $h$ - that we believe the chain has reached a distribution close to its stationary distribution $\pi$ after $h$ steps.

\section{CNRW: Circulated Neighbors \\ Random Walk}
\label{sec:cnrw}

\vspace{1mm}
In this section, we develop Circulated Neighbors Random Walk (CNRW), our first main idea for introducing historic dependency to the design of random walks over graphs. Specifically, we start with describing the key idea of CNRW, followed by a theoretical analysis of (1) its equivalence with simple random walk (SRW) in terms of (stationary) sampling distribution, and (2) its superiority over SRW on sampling efficiency. Finally, we present the pseudo code for Algorithm CNRW. 

\subsection{Main Idea and Justification}
\label{subsec:cnrw-idea}
\vspace{1mm}
\noindent{\bf Key Idea of CNRW:}
In traditional random walks, the transition at each node is {\em memoryless} - i.e., when the random walk arrives at a node, no matter where the walk comes from (i.e., what the incoming edge is) or which nodes the walk has visited, the outgoing edge is always chosen uniformly at random from all edges attached to the node. The key idea of CNRW is to replace such a memoryless transition to a stateful process. Specifically, given the previous transition of the random walk $u \to v$, instead of selecting the next node to visit by {\em sampling with replacement} from $N(v)$, i.e., the neighbors of $v$, we perform such sampling {\em without replacement}.

\begin{figure*}
\centering

  \subfigure[$N(v)$]{
  \includegraphics[width=.16\textwidth]{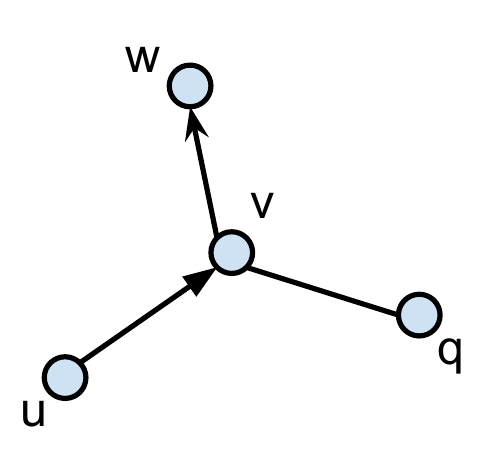}
  \label{fig:CNRW-demo1}
  }
    \subfigure[$N(v)-\{w\}$]{
  \includegraphics[width=.16\textwidth]{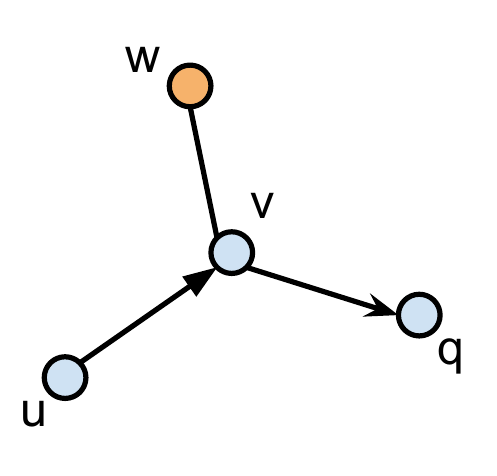}
  \label{fig:CNRW-demo2}
  }
    \subfigure[$N(v)-\{w, q\}$]{
  \includegraphics[width=.16\textwidth]{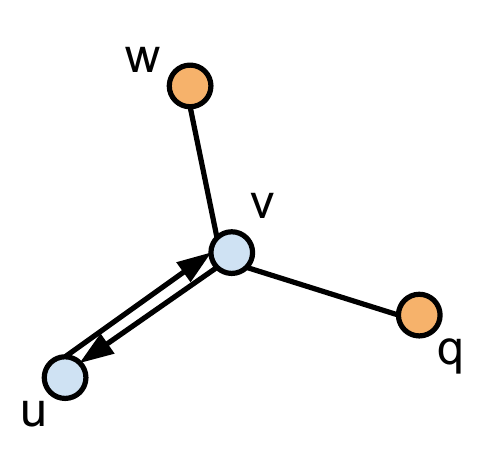}
  \label{fig:CNRW-demo3}
  }
    \subfigure[$N(v)-\{w, q, u\}$]{
  \includegraphics[width=.16\textwidth]{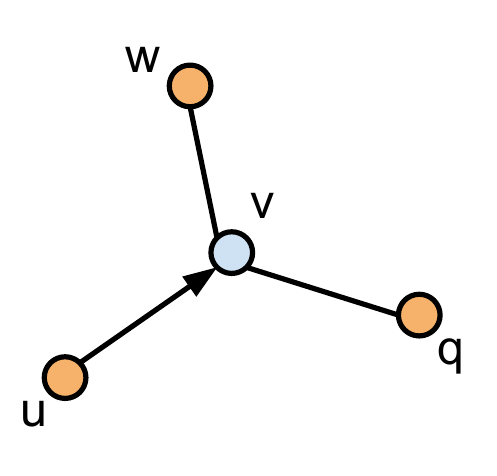}
  \label{fig:CNRW-demo4}
  }    
    \subfigure[$N(v)$]{
  \includegraphics[width=.16\textwidth]{pic/CNRW-demo1.pdf}
  \label{fig:CNRW-demo5}
  }
 \caption{Demo of CNRW, it chooses the next candidate from the set $N(v) - b(u,v)$.}
\label{fig:CNRW-demo}
\end{figure*}

Such a change is demonstrated through an example in Figure \ref{fig:CNRW-demo}: When a CNRW transits through $u \to v$ for the first time, it selects the next node to visit in the same way as traditional random walk - i.e., by choosing $w$ uniformly at random from $N(v)$. Nonetheless, if the random walk transits through $u \to v$ again in the future, instead of selecting the next node from $N(v)$, we limit the choice to be from $N(v) - \{w\}$. One can see that, given a transition $u \to v$, our selection of the next node to visit is a process of {\em sampling without replacement} from $N(v)$. This, of course, continues until $\forall w \in N(v)$, the random walk has passed through $u \to v \to w$, at which time we reset memory and restart the process of sampling without replacement. Also, we introduce a notation $b(u,v)$, which is defined as a set of nodes in $N(v)$ that we have passed through. Thus, we generalize the idea of CNRW as:
\begin{enumerate}
\item Each time when the random walk travels from $u$ to $v$, we uniformly choose the next candidate node w from $N(v)-b(u,v)$.
\item Let $b(u,v) \leftarrow b(u,v)\cup \{w\}$. If $b(u,v) = N(v)$, let $b(u,v) = \emptyset$.
\end{enumerate}

\noindent{\bf Intuitive Justification:} To understand why CNRW improves the efficiency of the sampling process, we start with an intuitive explanation before presenting in the next subsection a rigid theoretical proof of CNRW's superiority over the traditional simple random walk (SRW) algorithm.

Intuitively, the justification for CNRW is quite straightforward: If a random walk travels back to $u \to v$ after only a small number of steps, it means that the choice of $v \to w$ is not a ``good'' one for sampling because, ideally, we would like the random walk path to ``propagate'' to all parts of graph as quickly as possible, instead of being stuck at a small, strongly connected subgraph like one that includes $u$, $v$ and $w$. As such, it is a natural choice for CNRW to avoid following $v \to w$ again in the next transition, but to instead choose a node from $N(v) - \{w\}$ uniformly at random.

\subsection{Theoretical Analysis}
\label{subsec:cnrw-thm}
In this section, we will first introduce the concept of {\em path blocks} and then theoretically show that CNRW and SRW have the same stationary distribution - with the fundamental difference between the two being the how path blocks alternate during the random walk process. Finally, we prove that CNRW produces more accurate samples (i.e. with smaller or equal asymptotic variance than SRW) regardless the graph topology.

\vspace{1mm}
\noindent{\bf Path blocks:}
Intuitively, path blocks divide a random walk's path into consecutive segments. Formally, we have the definition of path blocks:
\begin{definition}\label{def:pb} {\bf [Path blocks].}
Given a random walk over graph $G(V,E)$, denote its path as $X_0, X_1, \dots, X_m, \dots$, where $X_i\in V$. A path block $B_{ij}$ is defined as 
\begin{equation}
B_{ij} = \{X_i, X_{i+1}, \dots, X_{j}\}, j>i.
\end{equation} 
\end{definition}

An interesting observation critical for the design of CNRW is that for path block $B_{ij}$, $X_i$ can be the same as $X_j$, which is called {\em recurrence} in a Markov Chain \cite{Neal2004}. As a typical positive recurrent Markov Chain, SRW will always go back to the same node if the number of steps is sufficiently large. Formally, positive recurrence means that if we denote $M_i = \sum_{n=0}^{\infty}I\{X_n=i|X_0=i\}, i\in V$, then its expected value $E(M_i)$ is always finite. The reason why recurrence is an important concept for CNRW is because, as one can see from its own, the key difference between CNRW and SRW is how we select the next transition when a recurrence happens.

As mentioned in the introduction, for the purpose of CNRW design, one actually has two choices on how to define a recurrence - based on edge-based and node-based path blocks, respectively:
\begin{align}
\text{Edge-based: } &\text{recurrence if } X_i = X_{j-1}, X_{i+1} = X_{j} \\
\text{Node-based: } &\text{recurrence if } X_i = X_{j}
\end{align}
In CNRW (and the following discussions in GNRW), we choose the edge-based design for the following main reason: Note that with the edge-based definition, path blocks separated by recurrences have much longer expected length than with the node-based design, because it takes much more steps for a random walk to travel back to an edge than a node. As such, edge-based path blocks tend to have similar distribution for each block (proved in Theorem~\ref{thm:cnrw_dist}), leading to a smaller inter-pathblock variance, which in turn results in a more significant reduction of asymptotic variance brought by CNRW's without-replacement design (proved in Theorem~\ref{thm:cnrw}). We also conducted extensive experiments to verify the superiority of edge-based design - though the results are not included in this paper due to space limitations.

\noindent{\bf Sampling Distribution:} As discussed in Section~\ref{subsec:randomwalk}, for SRW, the sampling distribution is $\pi_j = k_j/2|E|$. CNRW, on the other hand, is a higher-order Markov Chain. As such, the sampling distribution of CNRW is calculated as in equation (\ref{eq:sta-dist}). We will introduce a specific kind of path blocks first. 

\begin{definition}
\label{def:pb}
{\bf [Path blocks $B(i)$ rooted on edge $e_{uv}$].} Given a random walk with subsequence $u \rightarrow v \rightarrow i \rightarrow u_1 \to u_2 \to \cdots \to u_h \to u \to v \to \cdots$ where $i \neq u$, if there does {\em not} exist $j \in [1, h - 1]$ with $u_i = u$ and $u_{i+1} = v$, then we call the prefix of this subsequence ending on $u_h$ - i.e., $u \rightarrow v \rightarrow i \rightarrow u_1 \to u_2 \to \cdots \to u_h$ an instance of {\em path block} $B(i)$ rooted on $e_{uv}$.
\end{definition} 

What we really mean here is that once a random walk reaches $u \to v$ for the first time, the remaining random walk can be partitioned into consecutive, {\em non-overlapping} path blocks rooted on edge $e_{uv}$, each of which starts with $u \to v$ and ends with the last node visited by the random walk before visiting $u \to v$ again. Using this definition, the following theorem indicates the equivalence of sampling distributions of CNRW and SRW.

\begin{figure}
\centering
\includegraphics[scale=0.3]{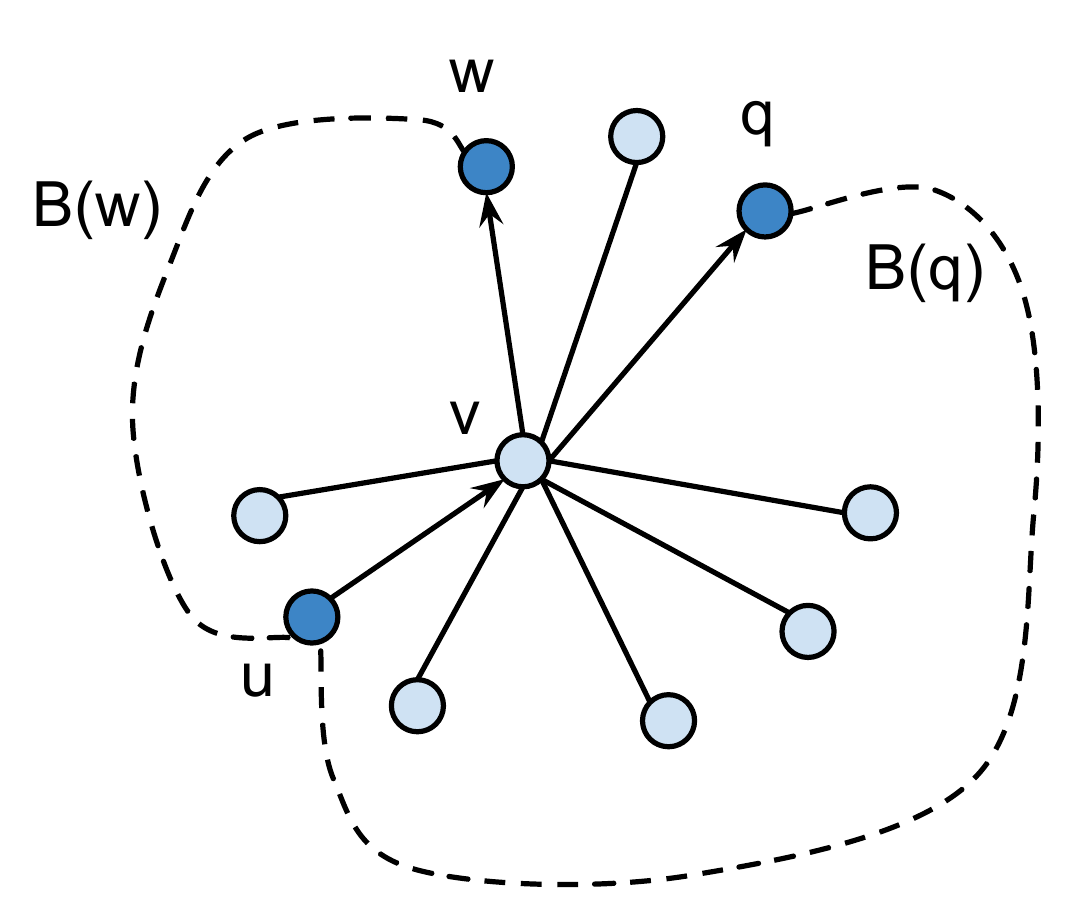}
\caption{Path blocks in CNRW}
\label{fig:cnrw-blocks}
\end{figure}

\begin{theorem}
Given a graph $G(V,E)$, CNRW has the stationary distribution $\pi(v) = k_v/2|E|$.
\label{thm:cnrw_dist}
\end{theorem}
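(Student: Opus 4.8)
The plan is to avoid reasoning about the node process directly and instead track the \emph{directed-edge process} $Y_1, Y_2, \dots$, where $Y_n$ is the ordered pair $(X_{n-1},X_n)$ traversed at step $n$, so that $X_n$ is the head of $Y_n$. The point is that, although CNRW is a high-order chain on $V$, it becomes an ordinary Markov chain once the state is augmented to record the current directed edge together with the block-memory configuration $\{b(u,v)\}$ over all directed edges, and this augmented state space is \emph{finite}. Consequently the Ces\`aro averages in Definition~\ref{def:sta} converge almost surely, and in particular the long-run frequency $f(u,v):=\lim_n \frac1n\sum_{m=1}^n \mathbf{1}[Y_m=(u,v)]$ of every directed edge exists. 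Since $X_m=v$ exactly when $Y_m$ has head $v$, we have $\pi(v)=\sum_{u\in N(v)} f(u,v)$, so it suffices to prove $f(u,v)=1/(2|E|)$ for every directed edge.

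The heart of the proof is two bookkeeping identities forced by the circulation rule. First, every occurrence of $(u,v)$ is immediately followed by exactly one step $(v,i)$ with $i\in N(v)$, and the without-replacement rule guarantees that among every $k_v$ consecutive occurrences of $(u,v)$ each neighbour $i$ is taken exactly once; hence the long-run frequency of the pattern ``$(u,v)$ then $(v,i)$'' equals $f(u,v)/k_v$, the same for all $i$. Second, every occurrence of $(v,i)$ is immediately preceded by exactly one step $(w,v)$ with $w\in N(v)$. Summing the first identity over $w$ gives $f(v,i)=\frac1{k_v}\sum_{w\in N(v)} f(w,v)$, which is independent of $i$; write $g(v)$ for this common value, so that $f(a,b)=g(a)$ for every directed edge and $g(v)=\frac1{k_v}\sum_{w\in N(v)} g(w)$. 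Thus $g$ is harmonic for the simple random walk on the finite connected graph $G$, hence constant by the maximum principle, and the normalisation $\sum_{\text{directed }(a,b)} f(a,b)=\sum_v k_v g(v)=2|E|\cdot g=1$ forces $g\equiv 1/(2|E|)$; therefore $\pi(v)=\sum_{u\in N(v)}g(u)=k_v/(2|E|)$.

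I expect the main obstacle to be not the combinatorics above but the ``soft'' part: ensuring the per-edge frequencies exist and are deterministic. The cleanest way around it is to use finiteness of the augmented state space to invoke the ergodic theorem for finite Markov chains (so the Ces\`aro averages converge a.s.), and then to observe that \emph{any} limiting frequencies must satisfy the two identities above, whose unique solution is $g\equiv 1/(2|E|)$ --- this avoids having to establish irreducibility of the augmented chain. An alternative route, which meshes with the path-block machinery set up for the later variance analysis (Theorem~\ref{thm:cnrw}), is to fix $e_{uv}$, decompose the walk into the path blocks rooted on $e_{uv}$ of Definition~\ref{def:pb}, bundle them into super-blocks of $k_v$ consecutive blocks --- within which, by the circulation rule, each first step $v\to i$ is used exactly once --- and apply renewal--reward: an exchangeability argument over the (uniformly random) order in which the $k_v$ first steps are consumed shows that the expected number of visits to $v$ in a super-block and the expected super-block length are each exactly $k_v$ times the corresponding quantities for an SRW excursion out of $e_{uv}$, so the ratio is again $k_v/(2|E|)$.
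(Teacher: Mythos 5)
Your argument is correct, but it is a genuinely different proof from the one in the paper. The paper proceeds by induction on the set of edges to which the circulation rule is applied: for each newly ``circulated'' edge $e_{uv}$ it decomposes the walk after the first visit of $u\to v$ into the path blocks $B(i)$ rooted on $e_{uv}$, and argues that the Ces\`aro average of visit probabilities is the same whether the block types appear in circulated order (CNRW) or i.i.d.\ uniformly (SRW), since both averages reduce to the uniform mixture over the $|N(v)|$ block distributions. Your first route instead works directly with long-run frequencies of directed edges: the circulation rule forces the outgoing-choice counts at each traversed edge $(u,v)$ to be within $\pm 1$ of equal shares (note: ``each neighbour exactly once'' holds for the aligned blocks of $k_v$ occurrences between resets, not for arbitrary windows, but the $\pm 1$ bound is all you use), which yields the exact balance equations $f(v,i)=\tfrac{1}{k_v}\sum_{w\in N(v)}f(w,v)$; combined with the one-step ``flow conservation'' these force $f(a,b)=g(a)$ with $g$ harmonic, hence constant, hence $1/(2|E|)$. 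This buys several things the paper's proof does not deliver cleanly: it avoids the edge-by-edge induction, it sidesteps the delicate limit interchange in the paper's equations (14)--(15), and --- via the finite augmented chain (directed edge plus memory configuration) and the observation that any a.s.\ limiting frequencies must solve a system with a unique solution --- it actually establishes that the Ces\`aro limits in Definition~\ref{def:sta} exist, a point the paper takes for granted. What the paper's block-decomposition approach buys in exchange is the machinery (path blocks, circulated versus i.i.d.\ orderings) that is reused verbatim in the asymptotic-variance comparison of Theorem~\ref{thm:cnrw}; your alternative renewal--reward route over super-blocks of $k_v$ consecutive path blocks is essentially that same decomposition and would mesh with it directly.
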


\begin{proof} We will construct the proof in two steps.

\begin{itemize}
    \item First, we prove that the stationary distribution is the same (i.e., $\forall s \in V, \pi(s) = k_s/(2|E|)$ where $k_s$ is the degree of $s$ for simple random walk) after applying CNRW to any {\em single} edge of the graph.
    \item Second, we prove that, $\forall k > 1$, if the stationary distribution is the same after applying CNRW to any set of $(k-1)$ edges, then the stationary distribution will remain the same after we apply CNRW to any additional ($k$-th) edge in the graph.
\end{itemize}

{\bf Step 1.}

We start with the first step - i.e., applying CNRW over only one edge, say $u \to v$. We denote this CNRW as CNRW(1).  Let SRW be the original simple random walk. The most critical concept for our proof is a {\em path block} $B(i)$ for each neighbor of $v$, i.e., $i\in N(v)$. According to the definition in the revised version (also discussed in response to D2), $B(i)$ represents a segment of the random walk that starts with $u \to v \to i$ and ends before the random walk travels back through $u \to v$ next time - e.g., if a segment of random walk is $u \rightarrow v \rightarrow i \rightarrow \cdots \rightarrow x \rightarrow u\rightarrow v \to j \to \cdots \to y \to u \to v$ (the two abbreviated parts in the middle do not contain $u \to v$), then this segment can be rewritten as $B(i) \to B(j) \to u \to v$, where $B(i) = u\rightarrow v \rightarrow i \rightarrow \cdots \rightarrow x$ and $B(j) = u\rightarrow v \rightarrow j \rightarrow \cdots \rightarrow y$ (here $x$ and $y$ can be any arbitrary node that has an edge linked to $u$\footnote{Note that, as further discussed in our response to D2, $B(i)$ and $B(j)$ are not overlapping on $u\rightarrow v$.}).

Let $X^{CNRW(1)}$ and $X^{SRW}$ be a random walk performed by CNRW(1) and SRW, respectively. Let $X^{CNRW(1)}_m$ (resp.~$X^{SRW}_m$) be the node accessed by $X^{CNRW(1)}$ (resp.~$X^{SRW}$) at the $m$-th step. The sampling probability for any node $j \in V$ according to the stationary distribution of the random walks are
\begin{align}
\pi^{CNRW(1)}(j) &= \lim_{n\rightarrow \infty}\frac{\sum_{m=1}^nPr\{X^{CNRW(1)}_m=j\}}{n}, \label{proof:cnrw_1}\\
\pi^{SRW}(j) &= \lim_{n\rightarrow \infty}\frac{\sum_{m=1}^nPr\{X^{SRW}_m=j\}}{n}, \label{proof:srw_1}
\end{align}

We now consider $Y^{CNRW(1)}$ (resp.~$Y^{SRW}$), i.e., an infinite sub-sequence of $X^{CNRW(1)}$  (resp.~$X^{SRW}$) which includes all steps occurring after the first visit of $u \to v$. There are two important observations here:
\begin{itemize}
\item First, both $Y^{CNRW(1)}$ and $Y^{SRW}$ can be completely separated into path blocks $B(i_1)$ $\to$ $B(i_2)$ $\to$ $B(i_3)$ $\to$ $\cdots$ where $\forall h, i_h \in N(v)$. The sole difference between CNRW(1) and SRW is on {\em how a path block is chosen} (i.e., CNRW(1) guarantees $i_1 \neq i_2 \neq \cdots \neq i_{|N(v)|}$ while SRW does not). The internal transitions within a path block, on the other hand, are indistinguishable for both approaches - i.e., for CNRW(1), once $B(i_1)$ is chosen for a slot, the internal transitions within $B(i_1)$ follow the exact same probability distribution as that of a $B(i_1)$ chosen for SRW.
\item Second, since $Y^{CNRW(1)}$  (resp.~$Y^{SRW}$) is equivalent with $X^{CNRW(1)}$ (resp.~$X^{SRW}$) after the first appearance of $u \to v$, as long as $u \to v$ appears in the random walk ($X^{CNRW(1)}$), there must be
\begin{align}
&\lim_{n\rightarrow \infty}\frac{\sum_{m=1}^nPr\{X^{CNRW(1)}_m=j\}}{n}  \\
= & \lim_{n\rightarrow \infty}\frac{\sum_{m=1}^nPr\{Y^{CNRW(1)}_m=j\}}{n} \label{equ:le1}\\
&\lim_{n\rightarrow \infty}\frac{\sum_{m=1}^nPr\{X^{SRW}_m=j\}}{n} \\
= & \lim_{n\rightarrow \infty}\frac{\sum_{m=1}^nPr\{Y^{SRW}_m=j\}}{n} \label{equ:le2}
\end{align}
Note that if $u \to v$ never appears in $X^{CNRW(1)}$, then this Step is already proved: Since CNRW(1) never kicks in, it of course yields the exact same stationary distribution as SRW.
\end{itemize}

Note that, according to the design of CNRW, once we rewrite $Y^{CNRW(1)}$ to path blocks, i.e., $B(i_1) \to B(i_2) \to B(i_3) \to \cdots$, we will see that all $|N(v)|$ path blocks appear in an iterative fashion (e.g., no path block $B(i_x)$ appears twice before another $B(i_y)$ appears once). In other words,
\begin{align}
& \lim_{n\rightarrow \infty}\frac{\sum_{m=1}^nPr\{Y^{CNRW(1)}_m=j\}}{n}\\
= & \frac{\sum_{i \in N(v)} \lim_{n^\prime \to \infty}\frac{\sum_{b=1}^{n^\prime}Pr\{B(i)_b=j\}}{n^\prime}}{|N(v)| \cdot n^\prime}\\
= & \lim_{n\rightarrow \infty}\frac{\sum_{i \in N(v)}\sum_{b=1}^nPr\{B(i)_b=j\}}{|N(v)| \cdot n}. \label{equ:eq1}
\end{align}
where $B(i)_b$ is the node at Step $b$ of path block $B(i)$.

On the other hand, consider the rewrite of $Y^{SRW}$ to path blocks. In this case, the path blocks do not appear in an iterative fashion. Instead, each path block is generated i.i.d.~uniformly from all $|N(v)|$ possible path blocks. Let $P(B(i))$ be the probability for $B_i$ to be chosen at each slot - one can see that $P(B_i) = 1/|N(v)|$. Thus,
\begin{align}
&\lim_{n\rightarrow \infty}\frac{\sum_{m=1}^nPr\{Y^{SRW}_m=j\}}{n} \\
= &\lim_{n\rightarrow \infty} \sum_{i \in N(v)} P(B(i)) \cdot \frac{\sum_{b=1}^nPr\{B(i)_b=j\}}{n}\\
= &\lim_{n\rightarrow \infty}\frac{\sum_{i \in N(v)}\sum_{b=1}^nPr\{B(i)_b=j\}}{|N(v)| \cdot n}. \label{equ:eq2}
\end{align}

One can observe from (\ref{equ:eq1}) and (\ref{equ:eq2}) that
\begin{align}
\lim_{n\rightarrow \infty}\frac{\sum_{m=1}^nPr\{Y^{CNRW(1)}_m=j\}}{n} = \lim_{n\rightarrow \infty}\frac{\sum_{m=1}^nPr\{Y^{SRW}_m=j\}}{n}.
\end{align}
This, combined with (\ref{proof:cnrw_1}), (\ref{proof:srw_1}) and (\ref{equ:le1}), (\ref{equ:le2}), leads to
\begin{align}
\pi^{CNRW(1)}(j) = \pi^{SRW}(j),
\end{align}
i.e., the stationary distribution achieved by CNRW(1) is exactly the same as that of SRW.

{\bf Step 2.}

Now we need to prove the second step: if we assume that stationary distribution is the same after applying CNRW to any set of $(k-1)$ edges, then the stationary distribution will remain the same after we apply CNRW to any additional $k$-th edge in the graph. We denote this CNRW as CNRW($k$), and let $u \to v$ be the additional $k$-th edge. Our definition of path blocks $B(i)$ for $i \in N(v)$ remains exactly the same as in Step 1. Note that
\begin{align}
\pi^{CNRW(k)}(j) &= \lim_{n\rightarrow \infty}\frac{\sum_{m=1}^nPr\{X^{CNRW(k)}_m=j\}}{n} \label{proof:cnrw_k}\\
\pi^{CNRW(k-1)}(j) &= \lim_{n\rightarrow  \infty}\frac{\sum_{m=1}^nPr\{X^{CNRW(k-1)}_m=j\}}{n} \label{proof:cnrw_k-1}
\end{align}

The difference between CNRW(k) and CNRW(k-1) is only on how a path block is chosen. For CNRW(k), once $B(i_1)$ is chosen for a slot, the internal transitions within $B(i_1)$ follow the exact same probability distribution as that of a $B(i_1)$ chosen for CNRW(k-1). Therefore, according to the design of CNRW, once we rewrite CNRW(k) to path blocks, i.e. $B(i_1)B(i_2)B(i_3)\dots$, we will get:

\begin{align}
\lim_{n\rightarrow \infty}\frac{\sum_{m=1}^nPr\{Y^{CNRW(k)}_m=j\}}{n} &= \lim_{n\rightarrow \infty}\frac{\sum_{i \in N(v)}\sum_{b=1}^nPr\{B(i)_b=j\}}{|N(v)| \cdot n}. \label{equ:eq3}
\end{align}

And CNRW(k-1) is just SRW based on the route $u\rightarrow v$ (since edge $e_{u,v}$ is the additional $k$-th edge that has been applied by CNRW(k)), $B(i)$ is chosen uniformly at random in CNRW(k-1). Thus,

\begin{align}
&\lim_{n\rightarrow \infty}\frac{\sum_{m=1}^nPr\{Y^{CNRW(k-1)}_m=j\}}{n} \\
= &\lim_{n\rightarrow \infty} \sum_{i \in N(v)} P(B(i)) \cdot \frac{\sum_{b=1}^nPr\{B(i)_b=j\}}{n}\\
= &\lim_{n\rightarrow \infty}\frac{\sum_{i \in N(v)}\sum_{b=1}^nPr\{B(i)_b=j\}}{|N(v)| \cdot n}. \label{equ:eq4}
\end{align}

We can combine the above equations (\ref{proof:cnrw_k}) - (\ref{equ:eq4}), which leads to

\begin{align}
\pi^{CNRW(k)}(j) = \pi^{CNRW(k-1)}(j).
\end{align}
\end{proof}

One can see from the theorem that, for both SRW and CNRW, the probability for a node to be sampled is proportional to its degree. As such, CNRW can be readily used as a drop-in replacement of SRW.

\vspace{1mm}
\noindent{\bf Improvement on Sampling Efficiency:}  Having proved that CNRW and SRW produce the same (distribution of) samples - we now develop theoretical analysis on the efficiency of CNRW and its comparison with SRW.  Note that, since the sampling distributions of both CNRW and SRW converge asymptotically to the stationary distribution (i.e., probability proportional to a node's degree), for a fair comparison between the efficiency of these two algorithms, we must first establish a measure of the {\em distance} between the sampling distribution achieved at $h$ steps and the (eventual) stationary distribution. The reason is that, only when such a distance measure is defined, we can then compare the number of steps each algorithm requires to reduce the distance below a given threshold $\epsilon$ - apparently, the one which requires fewer steps achieves better sampling efficiency.

For the purpose of this paper, we follow the commonly used {\em asymptotic variance} defined in section~\ref{subsec:measurements}. While there are multiple ways to understand the ``physical meaning'' of this measure, a simple one is that it measures the mean square error (MSE) one would get for an AVG aggregate (on the measure function $f(\cdot)$ used in the definition) by taking into account the nodes encountered in the first $h$ steps of a random walk. One can see that, the lower this MSE is, the closer the sampling distribution at Step $h$ is to the stationary distribution.

With the asymptotic variance measure, we are now ready to compare the sampling efficiency of CNRW and SRW. The following theorem shows that for any measure function $f(\cdot)$ and any value of $h$ (i.e., the number of steps), CNRW always achieves a lower or equal variance - i.e., better or equal sampling efficiency - than SRW. Before presenting the rigid proof, we first briefly discuss the intuition behind this proof as follows.

\begin{figure*}
\centering
\includegraphics[scale=0.48]{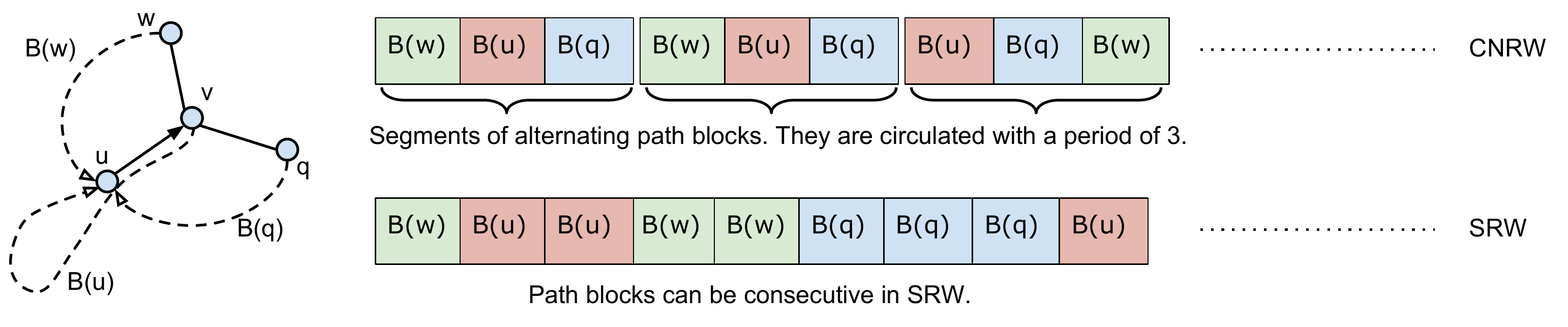}
\caption{Comparison of the block distribution in CNRW and SRW.}
\label{fig:cnrw-ex2}
\end{figure*}

The proof is constructed through induction. As such, we first consider the impact of CNRW on changing the transition after one edge, i.e., as in the above running example, the transition after $u \to v$. A key concept used in the proof is a segmentation of a long random walk into {\em segments} according to $u \to v$. Specifically, every segment of the random walk except the first one starts and ends with $u \to v$. Figure \ref{fig:cnrw-ex2} shows an example of this segmentation. We can see that CNRW are generating segments containing alternating path blocks.

With the segmentation, a key idea of our proof is to introduce an {\em encoding} of each segment according to the first node it visits after $u \to v$. For example, in the case shown in Figure~\ref{fig:cnrw-ex2} where $v$ has three neighbors $\{u, w, q\}$, we have three possible path blocks: $\{B(u), B(w), B(q)\}$. It is important to note that, since we are only considering the change after $u \to v$ at this time, every path block can be considered as being drawn from the exact same distribution no matter if CNRW or SRW is used - because CNRW does not make any changes once the first node after $u \to v$, i.e., the code, is determined, until the next time $u \to v$ is visited. This observation enables us to simple consider CNRW and SRW as sequences of {\em codes} (i.e. we can map path blocks into codes) in the efficiency comparison - as what happens within each path block is anyway oblivious to whether CNRW or SRW is being used.

Now studying the sequence of codes for CNRW and SRW, we make a straightforward yet interesting observation: given a sequence containing $\{B(u), B(w), B(q)\}$ with length $h$, the number of occurrences of $B(u)$, $B(w)$, and $B(q)$ will be the same (at least within $\pm 1$ range) in CNRW, because they will alternately appear every three codes. On the other hand, with SRW, the number of occurrences for $\{B(u), B(w), B(q)\}$ is {\em statistically} equal but have an inherent variance in practice due to randomness. The elimination of this randomness/variance is exactly why CNRW tends to generate samples with a smaller variance than SRW - as rigidly shown in the following theorem.

\begin{theorem}

Given a graph $G(V,E)$, any property function $f$, and the  following two estimators for $ \mu $ based on SRW ($\hat{\mu}$) and CNRW ($\hat{\mu'}$):
\begin{equation}
\hat{\mu}_n = \frac{1}{n}\sum_{t=1}^nf(X_t),\,\,\, \hat{\mu}_n'= \frac{1}{n}\sum_{t=1}^nf(X_t'),
\end{equation}
then CNRW will have no greater asymptotic variance (defined in Section~\ref{subsec:measurements}) than SRW
\begin{equation}
V_{\infty}(\hat{\mu'}) \leq V_{\infty}(\hat{\mu}).
\end{equation}
\label{thm:cnrw}
\end{theorem}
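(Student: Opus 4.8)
\emph{Proof proposal (plan).}

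\textbf{Overall strategy.} The plan is to prove the inequality by induction on the number of edges carrying an active circulation rule, mirroring the two-step structure of the proof of Theorem~\ref{thm:cnrw_dist}. Write CNRW$(k)$ for the walk that circulates on a fixed set of $k$ edges and leaves every other edge memoryless, so that CNRW$(0)=$ SRW and the target chain is CNRW$(|E|)=$ CNRW; it then suffices to show $V_{\infty}(\text{CNRW}(k))\le V_{\infty}(\text{CNRW}(k-1))$ for each $k$ and chain the inequalities. Throughout I would regard every one of these walks as a Markov chain on the finite augmented state space ``(current directed edge, the bookkeeping sets $b(\cdot,\cdot)$ of the circulated edges)''; finiteness of $G$ makes each such chain positive recurrent with finite return-time moments, which is exactly what is needed to invoke the standard renewal--reward expression for the asymptotic variance, $V_{\infty}=E[(R-\mu\tau)^2]/E[\tau]$, where $\tau$ and $R=\sum_{t\in\text{cycle}}f(X_t)$ are the length and reward of a regeneration cycle and $\mu=\sum_v\pi_v f(v)$ is the common stationary mean of all these walks (same $\pi$, by Theorem~\ref{thm:cnrw_dist}).

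\textbf{The single-edge comparison (base case).} Fix the newly circulated edge $e=u\to v$, set $d=k_v$, and segment the walk at the successive visits to $u\to v$, exactly as in the proof of Theorem~\ref{thm:cnrw_dist}. Each segment is a path block $B(i)$ rooted on $e_{uv}$ (Definition~\ref{def:pb}), indexed by its ``code'' $i\in N(v)$, the first node visited after $u\to v$; write $\tau_i$, $R_i=\sum_{t\in B(i)}f(X_t)$ and $Z_i=R_i-\mu\tau_i$ for its length, reward and centered reward, and note $\sum_{i\in N(v)}E[Z_i]=0$ since $\mu=(\sum_i E[R_i])/(\sum_i E[\tau_i])$. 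For the baseline CNRW$(k-1)$, which acts like SRW relative to $e$ because circulation on the other edges never touches the transition out of $u\to v$, a regeneration cycle is one segment whose code is uniform on $N(v)$, so $V_{\infty}(\text{CNRW}(k-1))=(\sum_i E[Z_i^2])/(\sum_i E[\tau_i])$. For CNRW$(k)$ a regeneration cycle is a \emph{round} of $d$ consecutive segments run until $b(u,v)$ resets: within a round the codes are a uniformly random permutation of $N(v)$ (one segment per code), and, conditioned on its code, each block has the same law as in the baseline and uses fresh independent randomness (a block never re-traverses $u\to v$, so the new circulation is dormant inside it). Hence the cycle reward is $\sum_{i\in N(v)}Z_i$ with the $Z_i$ independent and mean-zero in sum, giving $V_{\infty}(\text{CNRW}(k))=\text{Var}(\sum_i Z_i)/\sum_i E[\tau_i]=(\sum_i\text{Var}(Z_i))/(\sum_i E[\tau_i])$, and $\text{Var}(Z_i)=E[Z_i^2]-(E[Z_i])^2\le E[Z_i^2]$ closes the step, the slack being $(\sum_i (E[Z_i])^2)/(\sum_i E[\tau_i])$ --- precisely the inter-path-block variance that the without-replacement (finite-population) design annihilates by forcing each block type to appear equally often.

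\textbf{Main obstacle.} The argument above silently used that, under the baseline CNRW$(k-1)$, successive path blocks rooted on $e$ are i.i.d.; this holds for $k=1$ (baseline $=$ SRW) but not in general, because the sets $b(e_1),\dots,b(e_{k-1})$ of the already-circulated edges persist across visits to $e$, so a block's law depends on the counter configuration present when the walk sits on $u\to v$. This is the delicate part. The way I would handle it is to run the whole renewal--reward comparison on the augmented chain described above and regenerate at returns to a single fixed augmented state sitting on $e$ with empty $b(u,v)$ (such a state is recurrent since $u\to v$ is hit infinitely often and only finitely many counter configurations can occur there), so that a cycle of each chain decomposes into a whole number of such rounds; matching per-block laws then only conditionally on (augmented start state, code), the sole remaining structural difference between CNRW$(k)$ and CNRW$(k-1)$ is still that the former visits the $d$ codes in random-permuted rounds while the latter draws them i.i.d., and the same ``variance of a one-of-each independent sum $\le$ $d$ times the variance of a uniformly drawn block'' inequality applies. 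The technical crux to discharge carefully is the non-independence of successive rounds \emph{within} one such regeneration cycle (the number and identity of rounds in a cycle are correlated with the block realizations); once that bookkeeping is done the induction goes through, and chaining the $|E|$ inequalities yields $V_{\infty}(\hat{\mu}')\le V_{\infty}(\hat{\mu})$, with equality exactly when all the $E[Z_i]$ coincide (e.g.\ when $f$ is constant).
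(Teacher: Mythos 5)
Your proposal is correct and follows the same skeleton as the paper's proof --- induction over the set of circulated edges, segmentation into path blocks $B(i)$ rooted on $e_{uv}$, and the observation that forcing the $|N(v)|$ block types to appear equally often is what removes variance --- but it proves the key step by a genuinely different lemma. Where the paper invokes Neal's lemma on alternating $\{0,1\}$-sequences (quoted in Section~\ref{subsec:cnrw-thm}) as a black box, you compute both asymptotic variances explicitly via the regenerative CLT, $V_\infty = E[(R-\mu\tau)^2]/E[\tau]$: taking one block as a cycle for the baseline and one full round of $d=k_v$ blocks as a cycle for CNRW, the comparison collapses to $\sum_i \mathrm{Var}(Z_i) \le \sum_i E[Z_i^2]$, a finite-population-correction argument. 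This buys an exact expression for the improvement, $\bigl(\sum_i (E[Z_i])^2\bigr)/\bigl(\sum_i E[\tau_i]\bigr)$ --- precisely the inter-path-block variance --- together with an equality condition, neither of which is visible from the citation of Neal. You also explicitly name the weak point that the paper's Step~1 passes over in silence: for $k>1$ the baseline CNRW$(k-1)$ does not produce i.i.d.\ blocks rooted on $e$, because the counters $b(\cdot,\cdot)$ of the other circulated edges persist across returns to $u\to v$, so neither Neal's conditional-independence hypothesis nor your i.i.d.-cycle formula applies verbatim. Your proposed repair (regenerate at a fixed recurrent configuration of the augmented chain and compare permuted-round versus i.i.d.\ codes conditionally on the augmented start state) is the right one, though, as you note, the bookkeeping for the dependent rounds inside one augmented cycle still has to be carried out --- a step the paper's own proof does not supply either, so your write-up is no less complete than the original on this point.
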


\begin{proof}
The claim of asymptotic variance in Theorem \ref{thm:cnrw} can be divided into 3 steps:
\begin{enumerate}
\item
	We reduce the problem to comparing asymptotic variance when the transition matrix $P$ and $P'$ (for SRW and CNRW accordingly) differ only for transitions involving one node's neighbors. In other words, given nodes $u, v$, we only consider the circulated neighbors rule for CNRW when $u\rightarrow v$, not all the nodes in $G$.
\item
	We then divide the random walk's trace into path blocks. These blocks are all starting with the route of $u\rightarrow v \rightarrow i$, $i\in N(v)$.
\item
	We see that blocks are equally likely, but may have difference distribution for their contents. Blocks in CNRW are alternating, and they will have lower or equal asymptotic variance.
\end{enumerate}

\textbf{Step 1.} Looking at one node's neighbors is enough. If we can prove for only one node's neighbors that CNRW achieves a smaller asymptotic variance, we can recursively apply it to each node and its neighbors. Therefore, given a specific node $v\in V$, one of its neighbor $u$, we only need to consider the circulated neighbors rule to $N(v)$ with an incoming route $u\rightarrow v$.

\textbf{Step 2.} For all the neighbors of $v$: $i\in N(v)$, we have path blocks $B(i)$ rooted on edge $e_{uv}$. For example, $B(w), B(q)$ are two blocks in Figure \ref{fig:cnrw-blocks}. 

\textbf{Step 3.} We will show that the alternating appearance of the blocks $B(i)$ lowers (or at least not increase) the asymptotic variance. We note that the probability of the blocks $P(B(i)) = P(B(j)), i,j\in N(v)$, because $P(u\rightarrow v\rightarrow i) = P(u\rightarrow v\rightarrow j)$. Also, if we denote the number of block $B(i)$ before $M$ blocks as $K_{B(i)}(M)$, then 
\begin{equation}
|K_{B(i)}(M) - K_{B(j)}(M)|\leq 1,\;\;\forall i,j\in N(v), M\geq 1
\end{equation}
holds for CNRW but not SRW. The sampling for the blocks is therefore stratified in CNRW. According to a Lemma in \cite{Neal2004}, the asymptotic variance of CNRW is no more than SRW.

{\em 
    
    \noindent\textbf{Lemma\cite{Neal2004}:}\ \ Let $Z_1,Z_2,\ldots$ be an irreducible
    Markov chain with state space $\{ 0, 1, 2 \}$, whose invariant
    distribution, $\rho$, satisfies $\rho(0)=\rho(1)$.  Let $Q_z$ for 
    $z=0,1,2$ be distributions for
    pairs $(H,L) \in {\mathbb R} \times {\mathbb R}^+$ having
    finite second moments.  Conditional on
    $Z_1,Z_2,\ldots$, let $(H_i,L_i)$ be drawn independently from $Q_{Z_i}$.  
    Define
    \begin{align}
    Z'_i & = & \left\{\begin{array}{ll} 
        Z_i & \mbox{if $Z_i=2$} \\[1pt]
        Z_k + \sum\limits_{j=1}^{i-1} I_{\{0,1\}}(Z_j)\ \ (\mbox{\rm modulo 2}) 
        & \mbox{if $Z_i\ne2$}
    \end{array}\right.
    \end{align}
    where $k=\min\{i:Z_i\ne2\}$.  (In other words, the $Z'_i$ are the same
    as the $Z_i$ except that the positions where $0$ or $1$ occurs have
    their values changed to a sequence of alternating $0$s and $1$s.)
    Conditional on
    $Z_1,Z_2,\ldots$, let $(H'_i,L'_i)$ be drawn independently from $Q_{Z'_i}$.  
    Define two families of estimators as follows:\vspace*{-10pt}
    \begin{equation}
    R_n \,=\, \sum_{i=1}^n H_i \,\Big/\, \sum_{i=1}^n L_i,\ \ \ \ \
    R'_n \,=\, \sum_{i=1}^n H'_i \,\Big/\, \sum_{i=1}^n L'_i
    \end{equation}
    Then the asymptotic variance of $R'$ is no greater than that of $R$.  In other
    words, 
    \begin{equation}
    \lim_{n \rightarrow \infty} nVar(R'_n) \le
    \lim_{n \rightarrow \infty} nVar(R_n)
    \end{equation}
    
}

This lemma justifies the claim that partial stratification of sampling for blocks cannot increase asymptotic variance. And it is easily to extend to the case the state space is $\{0, 1, 2, \dots, k\}$ as long as we keep the stratification of the states. In applying this lemma to the proofs in our theorem, $Z_1, Z_2, \dots$ are identifiers for the type of each block, we can use $0 = B(i_0), 1 = B(i_1), \dots$ etc. Please refer to \cite{Neal2004} for more details. \end{proof}

Theorem~\ref{thm:cnrw} establishes CNRW's superiority while the following Theorem~\ref{thm:barbell} shows how significant the superiority over a concrete example: barbell graph. The probability of propagating from one subgraph to another in CNRW is much greater than the one in SRW.

\begin{theorem}
\label{thm:barbell}
Given a barbell graph $G$, which contains two complete subgraphs $G_1,  G_2$. If we choose the initial node $v\in G_1$, then the expected value of $t_{CNRW}$ and $t_{SRW}$ satisfy that
\begin{equation}
\frac{P_{CNRW}}{P_{SRW}}> \frac{|G_1|}{|G_1|-1}\ln|G_1|
\end{equation}
where $P_{CNRW}$ and $P_{SRW}$ are the probability for CNRW and SRW to travel from $G_1$ to $G_2$.
\end{theorem}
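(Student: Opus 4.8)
The plan is to reduce the global crossing question to a purely local analysis at the single ``bridge edge'' of the barbell. Write $n=|G_1|$ and let $c$ be the vertex of $G_1$ incident to the bridge, so that $k_c=n$ ($n-1$ neighbors inside $G_1$ plus the bridge neighbor $c'\in G_2$). Since $G_1$ is a clique, the walk mixes inside $G_1$ on a time scale that is negligible compared with a bridge crossing, so I would first argue --- reusing the path-block decomposition of Theorem~\ref{thm:cnrw_dist} --- that ``traveling from $G_1$ to $G_2$'' is controlled entirely by what happens on the successive visits to $c$: each such visit is a ``trial'' whose success is the event that the outgoing edge selected at $c$ is the bridge edge $c\to c'$, and $t_{SRW},t_{CNRW}$ are essentially the numbers of such trials needed before the first success.

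For SRW the trial analysis is immediate and memoryless: at every visit to $c$ the bridge edge is selected with probability exactly $1/n$, independent of the past, so $t_{SRW}$ is geometric with mean $n$ and the per-opportunity crossing probability $P_{SRW}$ equals $1/n$. For CNRW the point is the circulated, without-replacement rule at $c$: conditioned on the bridge edge not yet having been selected during the current circulation through $c$'s neighborhood (via a fixed incoming edge), after $j-1$ of the other neighbors have been consumed the bridge edge is selected with probability $1/(n-j+1)$ rather than $1/n$. Summing these conditional selection probabilities across a full circulation produces the harmonic number $H_n=\sum_{k=1}^{n}1/k$, which is the source of the $\ln n$ factor: one finds $P_{CNRW}=H_n/n$, hence $P_{CNRW}/P_{SRW}=H_n$, and the stated bound reduces to the elementary inequality $H_n>\frac{n}{n-1}\ln n$, valid for every $n\geq 2$ (immediate from $H_n=\ln n+\gamma+O(1/n)$ together with $\frac{n}{n-1}\ln n=\ln n+\frac{\ln n}{n-1}$).

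I expect the genuine difficulty to lie precisely in the step the heuristic above sweeps under the rug, namely establishing $P_{CNRW}=H_n/n$ rigorously: CNRW is a very high-order chain, so the successive visits to $c$ are not i.i.d.\ trials --- the state of the memory sets $b(w,c)$ depends on which incoming edge $w$ is used to return to $c$, and that edge is correlated with the recent trajectory inside $G_1$. I would attack this by (i) coupling the within-$G_1$ portions of CNRW and SRW so the sequence of incoming edges at $c$ has the same law under both; (ii) conditioning on that sequence and applying the partial-stratification lemma already used in Theorem~\ref{thm:cnrw} --- now to the two-valued labeling ``bridge edge vs.\ non-bridge edge'' of the outgoing choice at $c$ --- to conclude that CNRW stochastically dominates SRW for the event ``a crossing has occurred by the $k$-th visit''; and (iii) using the clique structure of $G_1$ to lower-bound the number of returns to $c$ that arrive through an already-used incoming edge, which is what actually converts the per-edge harmonic gain into the global speedup. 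A final, purely bookkeeping point is to reconcile the statement's notation: I would phrase the conclusion as $E[t_{SRW}]/E[t_{CNRW}]>\frac{n}{n-1}\ln n$ with $t$ the number of visits to $c$ before the first crossing, and remark that this ratio governs the hitting time of $G_2$ up to the $O(n)$, topology-independent cost of moving around inside $G_1$.
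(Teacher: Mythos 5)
Your proposal is essentially the paper's own argument: both reduce the crossing probability to a local analysis at the bridge vertex, where SRW selects the bridge edge with probability $1/|G_1|$ per visit while CNRW's without-replacement rule yields an average of $1/(|G_1|-i)$ over the possible memory states $i$, producing a harmonic sum and hence the $\ln|G_1|$ factor via the same integral comparison of $H_n$ with $\ln n$. The one substantive difference is to your credit: the step you flag as the genuine difficulty --- that the memory state at the bridge vertex is (close to) uniformly distributed over $\{0,\dots,|G_1|-1\}$ across successive visits, so those visits can be treated as the stated trials --- is exactly the step the paper dispatches with the single unproved sentence that ``the previous accessed neighbors are evenly distributed among $N(u)$'' (and the paper's normalization of the resulting sum by $|G_1|-1$ rather than $|G_1|$ is, if anything, less self-consistent than yours, though either normalization yields the claimed bound).
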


\begin{proof}
We denote the edge $e_{uw}$ ($u\in G_1, w\in G_2$) as the bridging edge in $G$. Now we only need to consider the transition probability of $u$. For SRW, 
\begin{equation}
P_{SRW} = P(u\rightarrow w) = \frac{1}{|N(u)|} = \frac{1}{|G_1|},
\end{equation}
where $|G_1|$ is the number of nodes in $G_1$. For CNRW, 
\begin{align}
P_{CNRW} = P(u\rightarrow w) &= \frac{1}{|G_1| - 1}\sum_{s\in \{G_1 - u\}}P(u\rightarrow w|s \rightarrow u) \label{eq:barbell-1}\\
&= \frac{1}{|G_1|-1}\sum_{i=0}^{|G_1|-1}\frac{1}{|N(u)|-i|} \label{eq:barbell-2}\\
&> \frac{1}{|G_1|-1}\int_{1}^{|N(u)|}\frac{1}{x}dx \\
&= \frac{1}{|G_1|-1}\ln|G_1| \\
&= \left(\frac{|G_1|}{|G_1|-1}\ln|G_1| \right)P_{SRW}
\end{align}
Eq (\ref{eq:barbell-1}) $\Rightarrow$ (\ref{eq:barbell-2}) because the previous accessed neighbors are evenly distributed among $N(u)$.
\end{proof}

\subsection{Algorithm CNRW}
\label{sec:alg_cnrw}
\begin{algorithm}[htb]
    \caption{Circulated Neighbors Random Walk}
    \begin{algorithmic}
    \STATE /* Given $x_0 = u$, $x_1=v$, $b(x_0, x_1) = \emptyset$ */
    \STATE /* Function $b(u,v)$ can be implemented as a HashMap, and their initial value are all $\emptyset$. */	
	        \FOR{$i = 2 \to sample\_size$}
	        	\STATE /* $S$ denotes the next possible candidates */
	        	\STATE $S \gets N(x_{i-1}) - b(x_{i-2}, x_{i-1})$ 
				\IF{$S\neq \emptyset$}
	        		\STATE $x_i \gets$ uniformly choose a node from $S$
	        		\STATE $b(x_{i-2}, x_{i-1}) = b(x_{i-2}, x_{i-1}) \cup \{x_{i}\}$
	        	\ELSE
	        		\STATE $x_i \gets$ uniformly choose a node from $N(x_{i-1})$
	        		\STATE $b(x_{i-2}, x_{i-1}) = \emptyset$
	        	\ENDIF
	        \ENDFOR	        
	\end{algorithmic}
\label{alg:cnrw}
\end{algorithm}

\noindent{\bf Algorithm implementation}. Algorithm \ref{alg:cnrw} depicts the pseudo code for Algorithm CNRW. We note that the only data structure we maintain (beyond what is required for SRW) is a {\em historic hash map of outgoing transitions} $b(u, v)$ for each edge $e_{uv}$ we pass through - i.e., if the random walk has passed through edge $e_{uv}$ before, then $b(u,v)$ contains the neighbors of $v$ which have been chosen previously (during the walk) as the outgoing transitions from $v$.

\noindent{\bf Time and space complexity}. CNRW requires a hash map that continuously records the outgoing transitions for each edge, i.e. the key-value pair: $e_{uv} \rightarrow b(u,v)$. Assume CNRW walks $K$ steps, and the keys $e_{uv}$ are uniformly distributed among all the edges in the graph for CNRW (the same as SRW), then this hash map as a whole occupies expected $O(K)$ space (according to \cite{fredman1984storing} if we are using dynamic perfect hashing). Also, each step's amortized expect time complexity is $O(1)$, so the total expected time complexity for $K$ steps is $O(K)$.

\section{GNRW: GroupBy Neighbors \\Random Walk}
\label{sec:gnrw}

\subsection{Basic Idea}
\label{subsec:gnrw-idea}
Recall from Section~\ref{subsec:cnrw-idea} the main idea of CNRW: Given an incoming transition $u \to v$, we essentially {\em circulate} the next transition among the neighbors of $v$, ensuring that we do not attempt the same neighbor twice before enumerating every neighbor. The key idea of GNRW is a natural extension: Instead of performing the circulation at the granularity of each neighbor (of $v$), we propose to first {\em stratify} the neighbors of $v$ into groups, and then circulate the selection among all groups. In Figure~\ref{fig:gnrw_exs}, for example, if we visit $u \to v$ for the second time, with the last chosen transition being from $v$ to a(nother) node in $S_2$, then this time we will randomly pick one group from $S_1$ and $S_3$ (with probability proportional to the number of not-yet-attempted transitions in each group), and then pick a node from the chosen group uniformly at random. GNRW can be summarized as:

\begin{enumerate}
\item It has a global groupby function $g(\cdot)$ that will partition a node's neighbors into disjoint groups. For example, $g(N(v)) = \{S_1, S_2, \dots, S_m\}$.
\item Each time when the random walk travels from $u$ to $v$, we choose the next candidate group $S_i$ from $N(v)-S(u,v)$ with probability $|S_i|/|N(v)-S(u,v)|$, where $S(u,v)$ is a set of groups we have accessed before.
\item Within $S_i$, we uniformly at random choose the next candidate node $w$ from $S_i - b_{S_i}(u,v)$, where $b_{S_i}(u,v)$ is defined similar to CNRW's - with a range limited in $S_i$.
\item Let $b(u,v) \leftarrow b(u,v)\cup \{w\}$, and $S(u,v) \leftarrow S(u,v)\cup \{S_i\}$. If $b(u,v) = N(v)$, let $b(u,v) = \emptyset$. If $S(u,v)$ contains all the possible groups, let $S(u,v) = \emptyset$.
\end{enumerate}

\begin{figure}
\centering
\includegraphics[width=.4\linewidth]{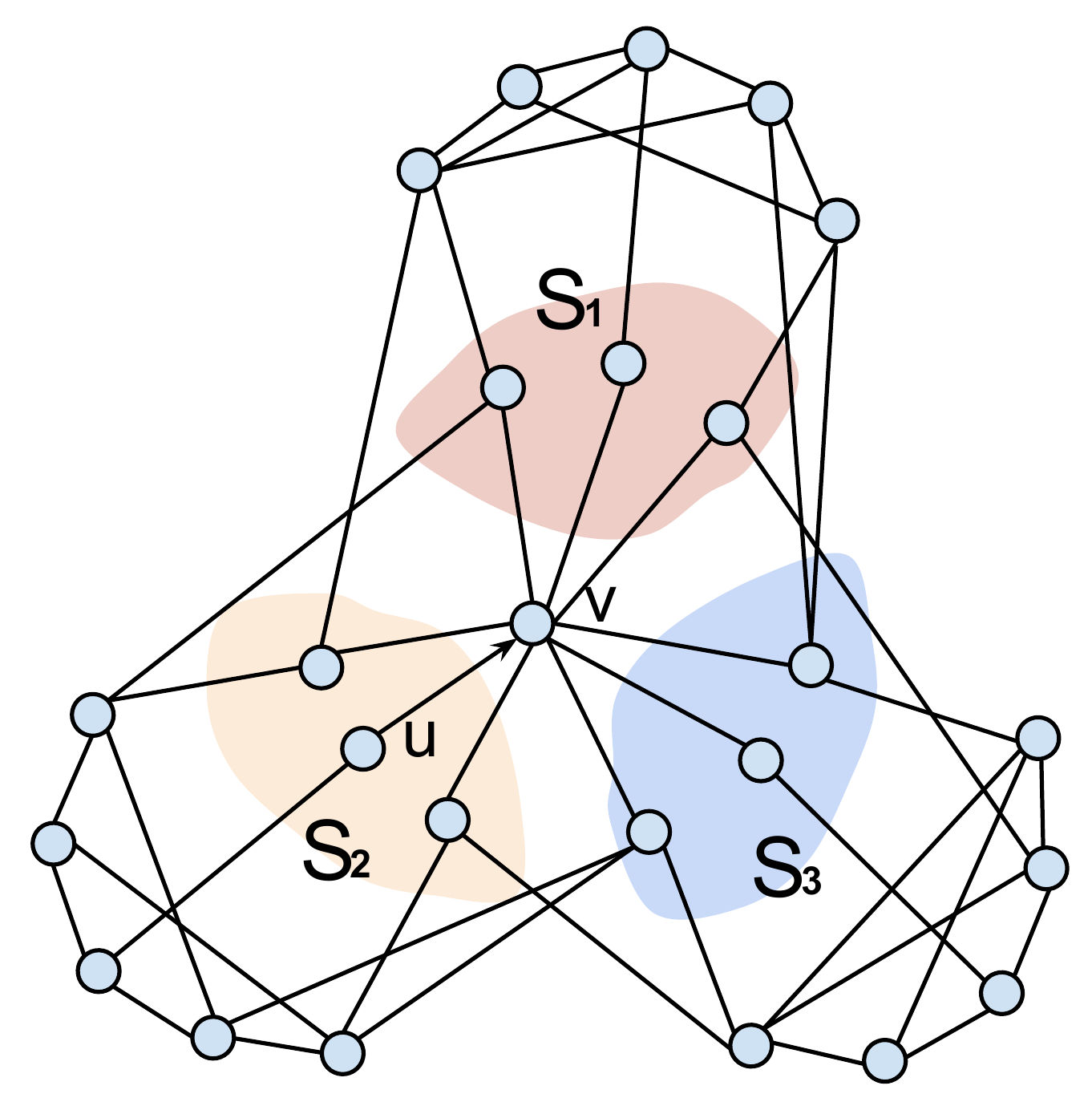}
\caption{An example of partitioning a node's neighbors into 3 groups.}
\label{fig:gnrw_exs}
\end{figure}

One can see from the design of GNRW that, like CNRW, it does not alter the stationary distribution of simple random walk - no matter how the grouping strategy is designed. Also, just like CNRW, GNRW guarantees a smaller or equal asymptotic variance than SRW. This is formalized in the following theorem, the proof of which can be constructed in analogy to Theorem~\ref{thm:gnrw_dist}.

\begin{theorem}
Given a graph $G(V,E)$, GNRW has the stationary distribution $\pi(v) = k_v/2|E|$. And for any property function $f$, the following two estimators for $ \mu $ based on SRW ($\hat{\mu}$) and GNRW ($\hat{\mu^*}$):
\begin{equation}
\hat{\mu}_n = \frac{1}{n}\sum_{t=1}^nf(X_t),\,\,\, \hat{\mu}_n^*= \frac{1}{n}\sum_{t=1}^nf(X_t^*),
\end{equation}
then GNRW will have no greater asymptotic variance than SRW
\begin{equation}
V_{\infty}(\hat{\mu^*}) \leq V_{\infty}(\hat{\mu}).
\end{equation}
\label{thm:gnrw_dist}
\end{theorem}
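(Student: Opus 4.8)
The plan is to prove the two assertions of Theorem~\ref{thm:gnrw_dist} by reusing, almost verbatim, the machinery developed for CNRW in Theorems~\ref{thm:cnrw_dist} and~\ref{thm:cnrw}, after observing that GNRW is simply another rule for \emph{sampling without replacement from $N(v)$} — only with an extra layer of grouping. For the stationary-distribution claim I would run the same two-step induction as in Theorem~\ref{thm:cnrw_dist}: first show the stationary distribution is unchanged after the group-circulation rule is installed on a single edge $e_{uv}$ (call it GNRW(1)), and then show that if it is unchanged after installing the rule on any $k-1$ edges, it remains unchanged after installing it on a $k$-th edge. The inductive step is formally identical to CNRW's, because once the rule on the $k$-th edge $e_{uv}$ is fixed, the internal transitions of every path block $B(i)$ rooted on $e_{uv}$ ($i\in N(v)$) follow the same law as under the chain without that edge's rule; GNRW only reshuffles \emph{which} block follows each visit of $u\to v$.

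So the whole argument reduces to the single-edge base case, where the only fact I need is that the long-run frequency with which path block $B(i)$ is selected equals $1/|N(v)|$ for every $i\in N(v)$, exactly as under SRW's i.i.d.\ uniform selection. I would establish this by checking that GNRW's two-level procedure — pick a not-yet-used group with probability proportional to its (remaining) size, then pick a not-yet-used member of that group uniformly, resetting when every neighbor of $v$ has been chosen — is a genuine without-replacement enumeration of $N(v)$: over one full cycle each neighbor is chosen exactly once, so the Ces\`aro average $\lim_n \frac1n\sum_{m=1}^n \Pr\{B_m = B(i)\}$ collapses to $1/|N(v)|$ regardless of the (size-weighted) order in which groups are visited. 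Plugging this into the path-block decomposition of the post-first-visit subsequence $Y^{GNRW(1)}$ gives $\lim_n\frac1n\sum_m \Pr\{Y^{GNRW(1)}_m=j\}=\lim_n\frac1n\sum_m\Pr\{Y^{SRW}_m=j\}$, hence $\pi^{GNRW}(j)=\pi^{SRW}(j)=k_j/(2|E|)$, just as in Theorem~\ref{thm:cnrw_dist}.

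For the variance claim I would follow the three-step template of Theorem~\ref{thm:cnrw}: (i) it suffices to compare SRW with the chain that installs GNRW's rule on a single node $v$, the general case following by applying this node-by-node; (ii) cut the trace into path blocks $B(i)$ rooted on $e_{uv}$, which — by the base case above — are equally likely and, in content, identically distributed under SRW and GNRW; (iii) invoke the stratification lemma of \cite{Neal2004} (in its $\{0,1,\dots,k\}$ form) on the sequence of block labels. The only point that differs from CNRW is the \emph{shape} of the stratification: CNRW forces $|K_{B(i)}(M)-K_{B(j)}(M)|\le 1$ for \emph{all} pairs $i,j$, whereas GNRW's group circulation forces the balance at the group level, and its within-group circulation additionally balances members inside each group. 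Either way GNRW's block-selection is a partially stratified version of SRW's i.i.d.\ selection with the same uniform invariant law, which is exactly the situation the lemma covers, so $V_\infty(\hat\mu^*)\le V_\infty(\hat\mu)$.

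The step I expect to be the real obstacle is (iii): making the application of the \cite{Neal2004} lemma rigorous for GNRW's \emph{size-weighted} group selection. For CNRW the circulation is a plain round-robin over equiprobable labels, which matches the lemma's alternating construction directly; for GNRW one must argue that the nested ``choose a group proportional to its remaining size, then round-robin inside the group'' scheme still fits the lemma's template of relabelling the positions carrying a stratified label by a balanced pattern — i.e.\ that grouping the $|N(v)|$ block types, balancing the group occurrences, and independently balancing within each group, is a legitimate composition of the lemma's operation and therefore never increases asymptotic variance. A secondary, more mundane obstacle is pinning down GNRW's reset bookkeeping (the interplay of $b(u,v)$, $b_{S_i}(u,v)$ and $S(u,v)$) precisely enough to be sure the base-case ``each neighbor chosen once per cycle'' statement holds exactly; once that is nailed down, the rest is a transcription of the CNRW proofs.
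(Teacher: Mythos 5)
Your proposal follows essentially the same route as the paper: decompose the walk after each traversal of $u\to v$ into path blocks $B(i)$ rooted on $e_{uv}$, argue that GNRW's two-level circulation selects each block with the same long-run frequency $1/|N(v)|$ as SRW (giving the stationary distribution), and invoke the stratification lemma of \cite{Neal2004} for the variance claim. The paper's own proof is far terser than yours --- it asserts the equal-block-frequency step in a single line (``each path block has the same probability to be accessed'') and dismisses the variance part as ``similar to CNRW's'' --- so the two obstacles you flag (whether the size-weighted group circulation fits Neal's alternating-label template, and whether the reset bookkeeping of $S(u,v)$ and $b_{S_i}(u,v)$ really yields exactly one selection per neighbor per cycle when groups have unequal sizes) are precisely the points the paper leaves unargued.
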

\begin{proof}
\begin{align}
\pi_j^{GNRW} &= \lim_{n\rightarrow \infty}\frac{1}{n}\sum_{m=1}^nPr\{X^{GNRW}_m=j\} \\
&= \lim_{n\rightarrow \infty}\frac{\sum_{i}\sum_{B(j)\in S_i}\sum_{X_m\in B(j)}Pr\{X_m^{GNRW}=j\}}{n} \label{proof:gnrw-eq2}\\
&= \lim_{n\rightarrow \infty}\frac{\sum_{i}\sum_{X_m\in B(i)}Pr\{X_m^{SRW}=j\}}{n}\label{proof:gnrw-eq3} \\
&= \lim_{n\rightarrow \infty}\frac{1}{n}\sum_{m=1}^nPr\{X^{SRW}_m=j\} \\
&= \pi_j^{SRW}.
\end{align}
Eq. (\ref{proof:gnrw-eq2}) $\Rightarrow$ (\ref{proof:gnrw-eq3}) because GNRW will iterate all the path blocks in $N(v)=\cup_iS_i$, and each path block has the same probability to be accessed ($S_i$ is chosen with probability proportional to its size). The rest of the proof is straightforward and similar to CNRW's.
\end{proof}

\begin{figure*}
\centering
\includegraphics[scale=0.5]{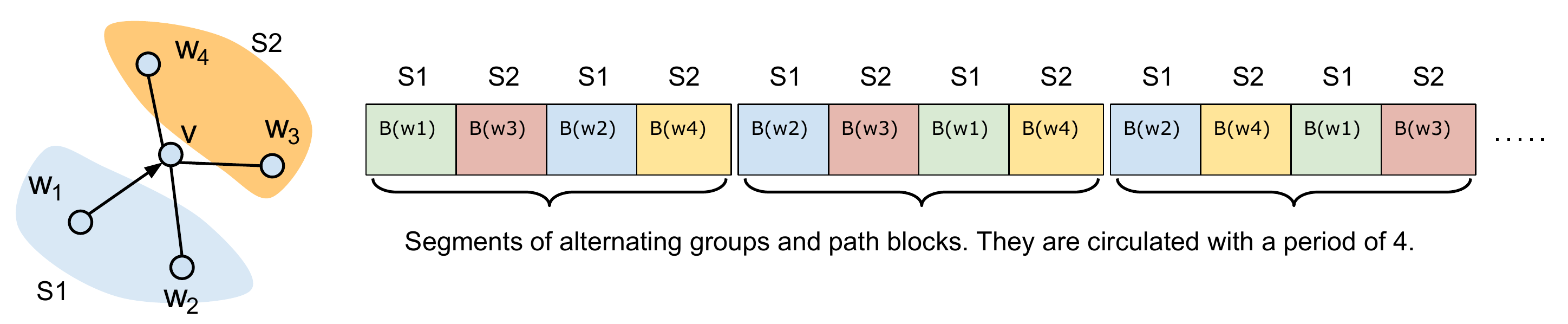}
\caption{A demo of GNRW}
\label{fig:gnrw-ex3}
\end{figure*}

In the following discussions, we first describe the rationale behind GNRW, and then discuss the design of the grouping strategy (for the neighbors of $v$), in other words, the design of groupby function $g(N(v)) = \{S_1, S_2, \dots, S_m\}$. To understand the rationale, we start by considering two extremes of the grouping-strategy design. At one extreme is to group neighbors of $v$ in a completely random fashion. With this strategy, GNRW is exactly reduced to CNRW - i.e., every neighbor of $v$ is circulated, with the order being a random permutation. At the other extreme is the ideal scenario for GNRW - as illustrated in Figure~\ref{fig:gnrw_exs} - when nodes leading to similar path blocks are grouped together. The intuitive reason why GNRW outperforms CNRW in this case is easy to understand from Figure~\ref{fig:gnrw_exs}: Circulating among the three groups, instead of attempting a group more than once, can make the random walk ``propagate'' faster to the entire graph rather than being ``stuck'' in just one cluster.

More formally, the advantage offered by GNRW can be explained according to the path-block encoding scheme introduced in Section~\ref{subsec:cnrw-thm}. Consider an example in Figure~\ref{fig:gnrw-ex3} where $v$ has 4 neighbors.  We encode the path block from each neighbor as random variables $B(w_1), \ldots, B(w_4)$.  Suppose that neighbors are partitioned into two groups: $S_1: \{w_1, w_2\}$ and $S_2: \{w_3, w_4\}$. One can see that GNRW is like stratified sampling, while CNRW is a process of simple random sampling (without replacement). As long as the intra-group variance for the two groups is smaller than the population variance, GNRW offers overall a lower asymptotic variance. To understand why, consider an extreme-case scenario with zero intra-group variance (i.e., $B(w_1) = B(w_2)$, $B(w_3) = B(w_4)$). One can see that, while $GNRW$ achieves zero overall variance, CNRW still has positive variance due to the inter-group variance (i.e., $B(w_1) \neq B(w_3)$).

Having discussed the rationale behind GNRW, we now turn our attention to the design of the grouping strategy. One can see from the above discussion that the main objective here is the group together neighbors that lead to similar path blocks - i.e., random walks starting from nodes in the same group should share similar characteristics. We would like to make two observations for the design: First, {\em locality} is a property widely recognized for social networks - i.e., users with similar attribute values (e.g., age, occupation) tend to have similar friends (and therefore lead to similar path blocks). Thus, grouping neighbors of $v$ based on any attribute is likely to outperform the baseline of random group assignments.

Second, which attribute to use for group assignments also has an implication on the potential usage of samples for analytical purposes. For example, if one knows beforehand that samples taken from a social network will be used to estimate the average age of all users, then designing the grouping strategy based on user age is an ideal design that will likely lead to a more accurate estimation of the average age. To understand why, note that with this grouping strategy, the random walk is likely to quickly ``propagate'' to users of different age groups, instead of being ``stuck'' at a tight community formed by users of similar ages. Thus, if one knows an important aggregate query which the collected samples will be used to estimate, then the grouping strategy should be designed according to the attribute being aggregated in the query. We shall verify this intuition with experimental results in Section~\ref{subsec:eva}.

\subsection{Algorithm GNRW}
\label{sec:alg_gnrw}
\begin{algorithm}[ht]
    \caption{Groupby Neighbors Random Walk}
    \begin{algorithmic}
    \STATE /* Given $x_0 = u$, $x_1=v$, a groupby function $g(\cdot)$ */
    \STATE /* And we assume that all $S(u,v)$ and $b_{S_i}(u,v)$ should be initialized as $\emptyset$. */	
	        \FOR{$i = 2 \to sample\_size$}
	            \STATE $g(N(x_{i-1}) = \{S_1, S_2, \dots, S_m\}$
	            \STATE $CS \gets \{S_1, S_2, \dots, S_m\} - S(u,v)$
				\IF{$CS\neq \emptyset$}
                    \STATE $S_i \gets$ choose a group with probability $|S_i|/|CS|$					
					\STATE $U \gets S_i - b_{S_i}(x_{i-2}, x_{i-1})$ 
					\IF{$U\neq \emptyset$}
		        		\STATE $x_i \gets$ uniformly choose a node from $U$
		        		\STATE $b_{S_i}(x_{i-2}, x_{i-1}) = b_{S_i}(x_{i-2}, x_{i-1}) \cup \{x_{i}\}$
		        	\ELSE
		        		\STATE $x_i \gets$ uniformly choose a node from $S_i$
		        		\STATE $b_{S_i}(x_{i-2}, x_{i-1}) = \emptyset$
		        	\ENDIF
	        	\ELSE
	        	    \STATE $S_i \gets$ uniformly choose a group from $\{S_1, S_2, \dots, S_m\}$
					\STATE $U \gets S_i - b_{S_i}(x_{i-2}, x_{i-1})$ 
					\IF{$U\neq \emptyset$}
		        		\STATE $x_i \gets$ uniformly choose a node from $U$
		        		\STATE $b_{S_i}(x_{i-2}, x_{i-1}) = b_{S_i}(x_{i-2}, x_{i-1}) \cup \{x_{i}\}$
		        	\ELSE
		        		\STATE $x_i \gets$ uniformly choose a node from $S_i$
		        		\STATE $b_{S_i}(x_{i-2}, x_{i-1}) = \emptyset$
		        	\ENDIF
	        	\ENDIF
	        \ENDFOR	        
	\end{algorithmic}
\label{alg:gnrw}
\end{algorithm}

\noindent{\bf Algorithm implementation.} Algorithm \ref{alg:gnrw} depicts the pseudo code for Algorithm GNRW. We note that the data structures we maintain are two {\em hash maps}: $S(u,v)$ and $b_{S_i}(u,v)$. $S(u,v)$ is a mapping from $(u,v)$ to the current set of groups that GNRW have accessed before based on the route $u\rightarrow v$; $b_{S_i}(u,v)$ is a mapping from $(u,v, S_i)$ to the current set of nodes that GNRW have accessed before based on the route $u\rightarrow v$ and the outgoing group $S_i$.

\noindent{\bf Time and space complexity.} GNRW requires two hash maps that continuously records the outgoing groups and edges for each edge, i.e. the key-value pairs: $e_{uv} \rightarrow S(u,v)$ and  $ (e_{uv}, S_i) \rightarrow b_{S_i}(u,v)$. Similar to CNRW, we assume GNRW walks $K$ steps, and it also has the amortized expected $O(K)$ time complexity and $O(K)$ space complexity, because the keys $e_{uv}$ and $(e_{uv}, S_i)$ are uniformly distributed among their possible values \cite{fredman1984storing}.

\section{Discussions}
\label{sec:discussion}
{\noindent\bf CNRW applied to Non-Backtracking Random Walk (NB-SRW)}
It is important to note that the idea of CNRW - i.e., changing
transition upon visiting an edge $u \to v$ from sampling with
replacement to sampling without replacement - is an idea that can be
applied to {\em any} base random walk algorithm, including both SRW
and NB-SRW \cite{58571882}. For example, if we apply the idea to NB-SRW, then the
resulting algorithm (say NB-CNRW) will work as follows: Upon visiting
$u \to v$, instead of sampling the next node with replacement from
$N(v) \backslash u$ (like in NB-SRW), we would sample it without
replacement from $N(v) \backslash u$. Note the difference between
NB-CNRW and the CNRW algorithm presented in the paper (which is based
on SRW): With CNRW, the sampling is done over $N(v)$ while with
NB-CNRW, the sampling is done over $N(v) \backslash u$ - indeed a
carry-over change from NB-SRW.

\vspace{2mm}
{\noindent\bf How does graph size affects CNRW and GNRW.}
First, we note that the graph size is unlikely
to be a main factor in the historic visit probability. To understand
why, consider SRW over an undirected graph and the probability of
going back to the starting node (which, without loss of generality, is
the probability for historic visit to occur). Note that the
probability of going back to the starting node at Step $i$ keeps
decreasing with $i$, until the random walk visits the starting node
again \cite{burioni2005random}. Thus, the historic visit probability
mainly depends on the first $k$ steps after visiting a node, where $k$
is a small constant. In more intuitive terms, a random walk is mostly
likely to go back to a node only a few steps after visiting the node
(because, after only a few steps, the random walk is likely still
within a ``tightly connected'' local neighborhood of the node). This
essentially means that the historic visit probability is unlikely to
be sensitive to graph size - after all, even when the graph size tends
to infinity, the probability of visiting (or re-visiting) a node
within a constant number of steps is unlikely to change much. As an
extreme-case example, growing the graph beyond the $k$-hop
neighborhood of the starting node does not change the historic visit
probability within $k$ steps at all. 

\section{Experiments}
\label{sec:exp}

\subsection{Experimental Setup}

\vspace{1mm}
\noindent {\bf Hardware and platform:} We conducted all experiments on a computer with Intel Core i3 2.27GHz CPU and 64bit Ubuntu Linux OS.

\noindent {\bf Datasets:} We tested three types of datasets in the experiments: well-known public benchmark datasets that are small subsets of real-world social networks, large online social networks such as Google Plus and Yelp, and synthetic graphs (for demonstrating extreme-case scenarios) - e.g., barbell graphs and small clustered graphs. We briefly describe the three types of datasets we used respectively as follows (see the summary of these datasets in Table~\ref{tab:datasets}).

\begin{table*}[ht]
\centering
\begin{tabular}{|r|r|r|r|r|r|}
    \hline  & nodes & edges & average Degree & average clustering coefficient & number of triangles \\ 
    \hline Facebook & 775 & 14006 & 36.14 & 0.47 & 954116 \\ 
    \hline Google Plus & 240276 & 30751120 & 255.96 & 0.51 & 2576826580 \\ 
    \hline Yelp & 119839 & 954116 & 15.92 & 0.12 & 4399166 \\ 
    \hline Youtube & 1134890 & 2987624 & 5.26 & 0.08 & 3056386 \\ 
    \hline Clustering graph & 90 & 1707 & 37.93 & 0.99 & 23780 \\ 
    \hline Barbell graph & 100 & 2451 & 49.02 & 0.99 & 39200 \\ 
    \hline 
\end{tabular}
\caption{Summary of the datasets in the experiments.}
\label{tab:datasets}
\end{table*}

\vspace{1mm}
{\em Public Benchmark:} 

The Facebook dataset is a public benchmark dataset collected from \cite{stanford_dataset}. It is a previously-captured topological snapshot of Facebook and it has been extensively used in the literature (e.g., \cite{Jul2012}). Specifically, the graph we used is from the ``1684.edges'' file. Youtube is another large public benchmark graph collected from  \cite{yang2015defining}. 
In these public benchmark dataset, we simulated a restricted-access web interface precisely according to the definition in Section \ref{subsec:model}, and ran our algorithms over the simulated interface. 

\vspace{1mm}
{\em Large Online Social Graphs:} 

Google Plus\footnote{https://plus.google.com/}. To test the scalability of our algorithms over a large graph, we performed experiments over a large graph we crawled from Google Plus that consists of 240,276 users. We observe that the interface provided by Google Social Graph API strictly adheres to our access model discussed in Section~\ref{subsec:model} - i.e., each API request returns the local neighborhood of one user.

Yelp dataset\footnote{http://www.yelp.com/dataset\_challenge}. We extracted the largest connected subgraph containing 119,839 users (out of 252,898 users) from the dataset. We restored all the dumped JSON data into MongoDB to simulate API requests.

Since we focus on sampling undirected graphs in this paper, for datasets that feature directed graphs, we first converted it to an undirected one by only keeping edges that appear in both directions in the original graph. Note by following this conversion strategy, we guarantee that a random walk over the undirected graph can also be performed over the original directed graph, with an additional step of verifying the existence of the inverse direction (resp.~$v \to u$) before committing to an edge (resp.~$u \to v$) in the random walk.

\vspace{1mm}
{\em Synthetic Graphs:}

We also tested our algorithms over synthetic graphs, such as barbell graphs and graphs with high clustering coefficients for two main purposes: One is to demonstrate the performance of our algorithms over ``ill-formed'' graphs as these synthetic graphs have very small conductance (i.e., highly costly for burning in). The other is to control graph parameters such as number of nodes that we cannot directly control over the above-described real-world graphs. It is important to note that our usage of a theoretical graph generation model does not indicate a belief of the model being a representation of real-world social network topology.


\vspace{1mm}
\noindent{\bf Algorithms:} We implemented and tested five algorithms in the experiments: Simple Random Walk (SRW) \cite{Leskovec2006a}, Metropolis-Hastings Random Walk (MHRW) \cite{hastings1970monte}, Non-Backtracking Simple Random Walk (NB-SRW) \cite{58571882} - a state-of-the-art random walk algorithm which uses an order-2 Markov Chain, and two algorithms proposed in this paper: Circulated Neighbors Random Walk (CNRW) in Section~\ref{sec:cnrw} and Groupby Neighbors Random Walk (GNRW) in Section~\ref{sec:gnrw}. For each algorithm,  we ran it with a query budget ranging from 20 to 1000, and take the returned sample nodes to measure their quality (see performance measures described below). For GNRW, we tested various grouping strategies, as elaborated in Section~\ref{subsec:eva}.

It is important to understand why we included MHRW in the algorithms for testing. Note that, while SRW, NB-SRW and our two algorithms all share the same (target) sampling distribution - i.e., each node is sampled with probability proportional to its degree - MHRW has a different sampling distribution - i.e., the uniform distribution. Thus, it is impossible to compare the samples returned by MHRW with the other algorithms properly. We note that our purpose of including MHRW here is to simply verify what has been recently shown in the literature \cite{Gjoka2010} and \cite{58571882} - i.e., for practical purposes such as aggregate estimation over social networks, the performance of MHRW is much worse than the other SRW based algorithms, justifying the usage of SRW as a baseline in our design.

\vspace{2mm}
\noindent{\bf Performance Measures.} Recall from Section~\ref{subsec:measurements} that a sampling algorithm for online social networks should be measured by {\em query cost} and {\em bias} - i.e., the distance between the actual sampling distribution and the (ideal) target one, which in our case is $\pi(v) = k_v/(2|E|)$. To measure the query cost, one simply counts the number of unique queries issued by the sampler. The measurement of bias, on the other hand, requires us to consider two different methods (and three measures) described as follows.

For a small graph, we measured bias by running the sampler for an extremely long amount of time (long enough so that each node is sampled multiple times). We then estimated the sampling distribution by counting the number of times each node is retrieved, and compared this distribution with the target distribution to derive the bias. Their distances are measured in two forms: (1) KL-divergence \cite{dodge2003oxford}, and (2) $\ell_2$-distance \cite{dodge2003oxford} between the two distribution vectors. Let $P$ and $P_\mathrm{sam}$ be the ideal and measured sampling distribution vectors, respectively.

\begin{itemize}
\item To measure the distance in KL-divergence, we compute \\ $D_\mathrm{KL}(P || P_\mathrm{sam}) + D_\mathrm{KL}(P_\mathrm{sam} || P)$ where
\begin{equation}
D_\mathrm{KL}(P || Q) = \sum_{v_i \in V} \ln\left(\frac{P(v_i)}{Q(v_i)}\right)P(v_i),
\end{equation}
\item For $\ell_2$-norm, we use $\Vert P-P_\mathrm{sam}\Vert _{2}$.
\end{itemize}

Note that compared with the KL-divergence based measure, the $\ell_2$-norm one is more sensitive to ``outliers'' - i.e., large differences on the sampling probability for a single node - hence our usage of both measures in the experiments.

For a large graph like Google Plus and Yelp we used, it is no longer feasible to directly measure the sampling probability distribution (because sampling each node multiple times to obtain a reliable estimation becomes prohibitively expensive). Thus, we considered another measure of bias, {\em aggregate estimation error}, for experiments over large graphs. Specifically, we first used the collected samples to estimate an aggregate over all nodes in the graph - e.g., the average degree or reviews count - and then compare the estimation with the ground truth. One can see that, since SRW, NB-SRW and both of our algorithms all share the exact same target sampling distribution, a sampler with a smaller bias tends to produce an estimation with lower relative error - justifying our usage of it in the experiments.

\subsection{Evaluation}
\label{subsec:eva}

\begin{figure}
\centering
\includegraphics[width=.33\textwidth]{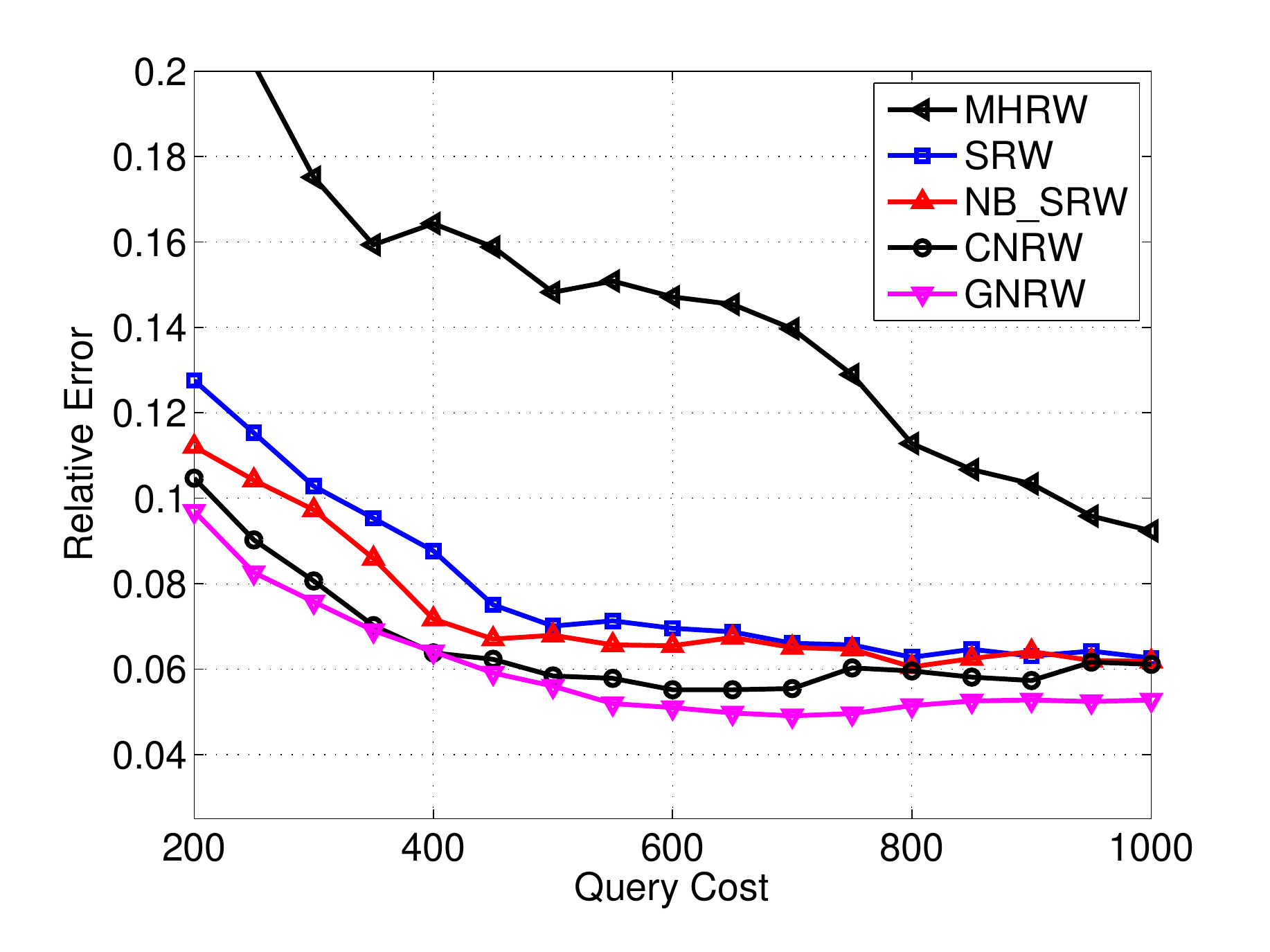}
\caption{Large Google Plus Graph: estimation of average degree.}
\label{fig:gplus}
\end{figure}

\begin{figure*}[htb]
\centering
  \subfigure[Facebook KL-divergence]{
  \includegraphics[width=.23\textwidth]{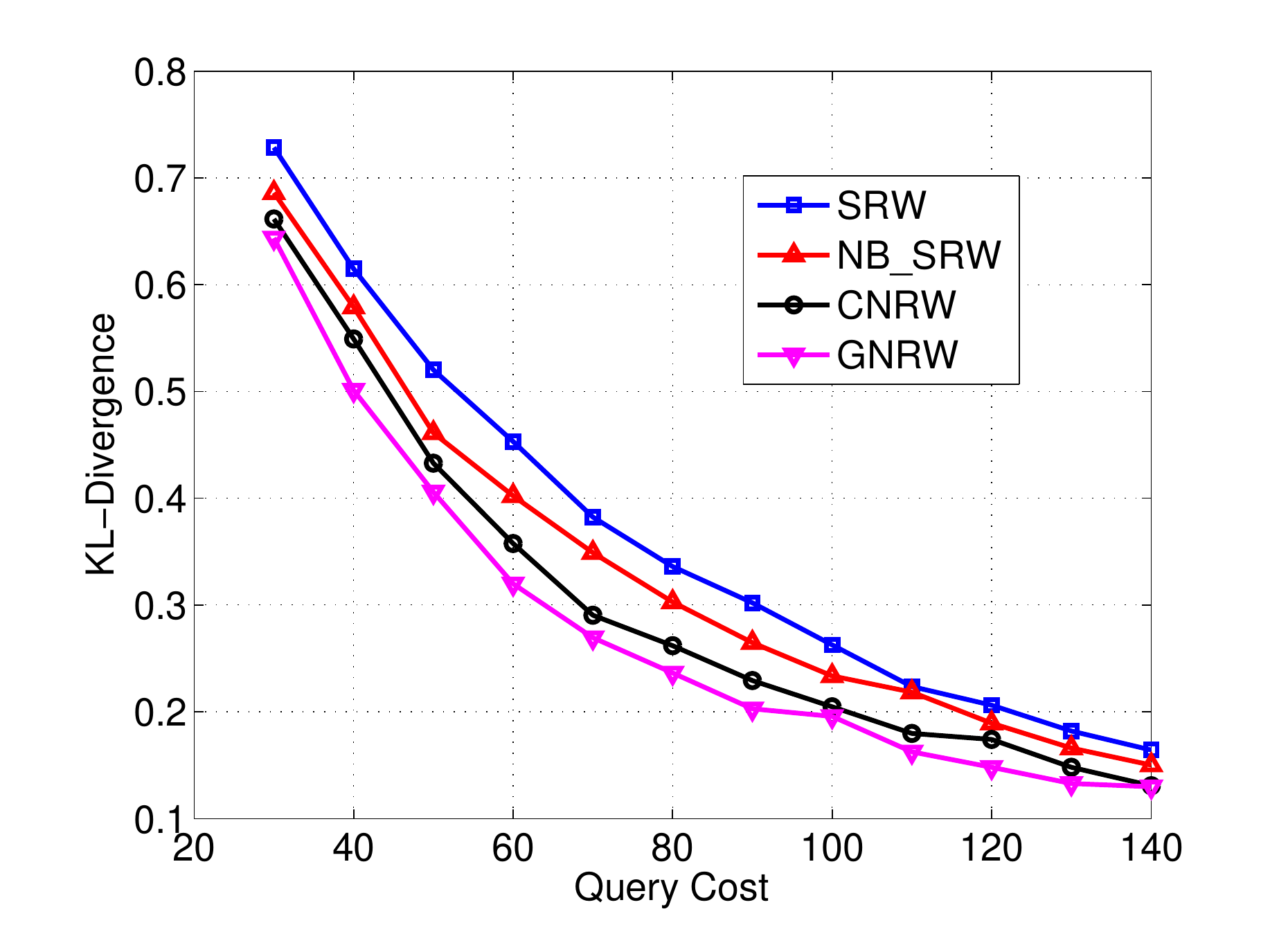}
  \label{fig:facebook2_kl}
  }
    \subfigure[Facebook $\ell_2$-distance]{
  \includegraphics[width=.23\textwidth]{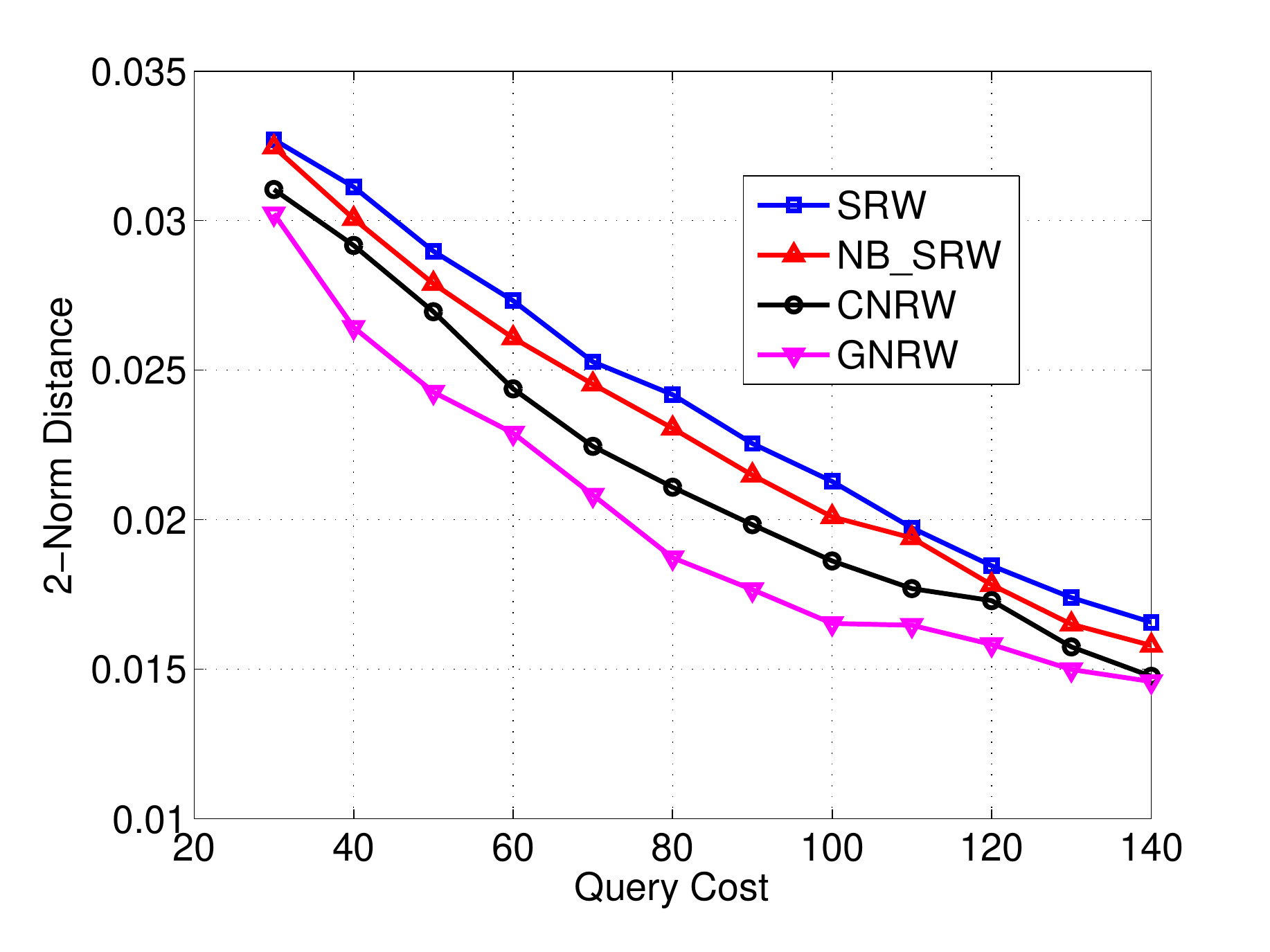}
  \label{fig:facebook2_2norm}
  }
    \subfigure[Facebook Estimation error]{
  \includegraphics[width=.23\textwidth]{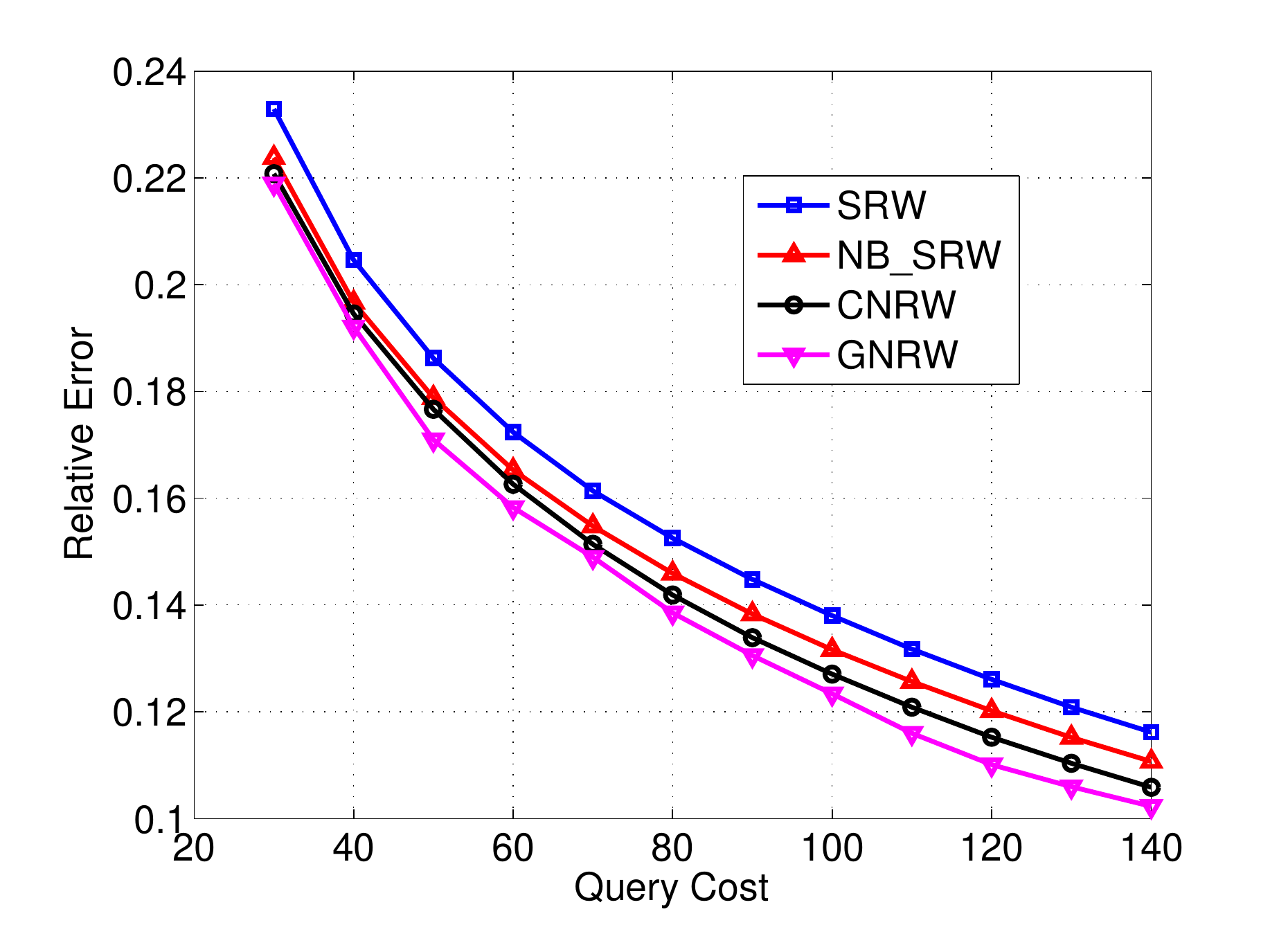}
  \label{fig:facebook2_est}
  }
   \subfigure[Youtube Estimation error]{
    \includegraphics[width=.23\textwidth]{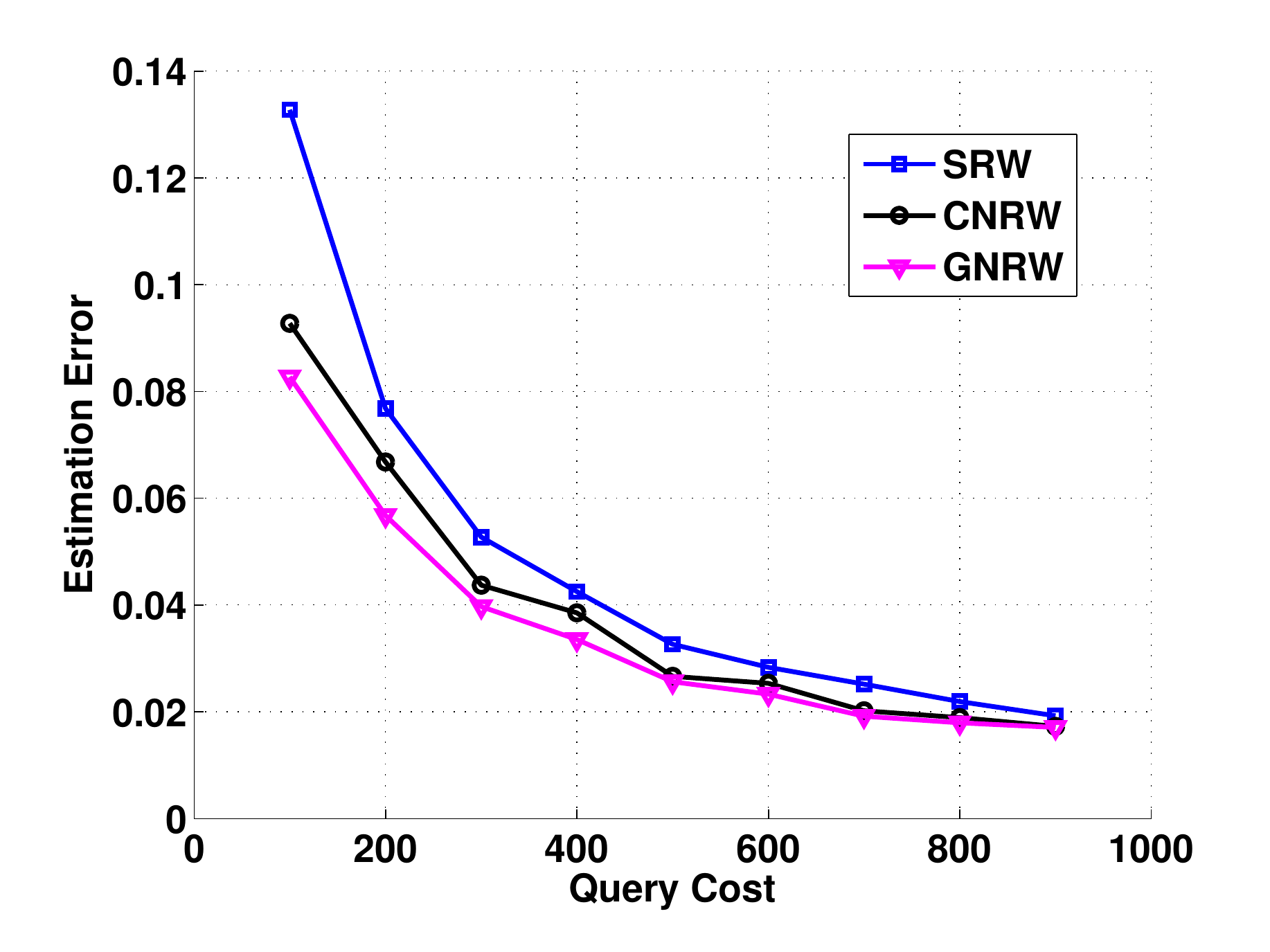}
    \label{fig:youtube_est}
    }
 \caption{Public benchmark datasets}
\label{fig:local-datasets}
\end{figure*}

\begin{figure*}[htb]
\centering

  \subfigure[facebook dataset 1]{
  \includegraphics[width=0.23\linewidth]{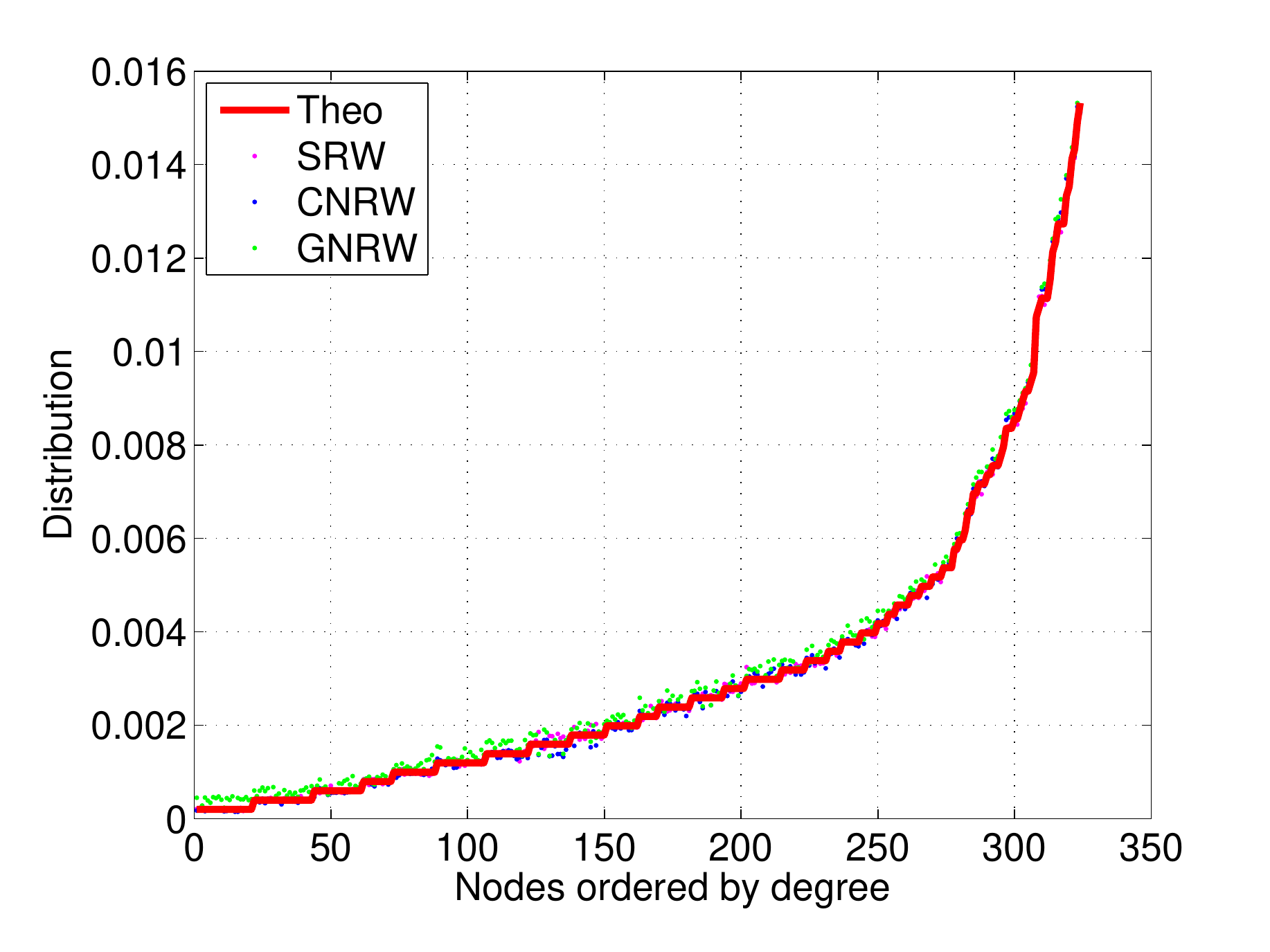}
  \label{fig:distribution-facebook0}
  }
  \subfigure[facebook dataset 2]{
  \includegraphics[width=0.23\linewidth]{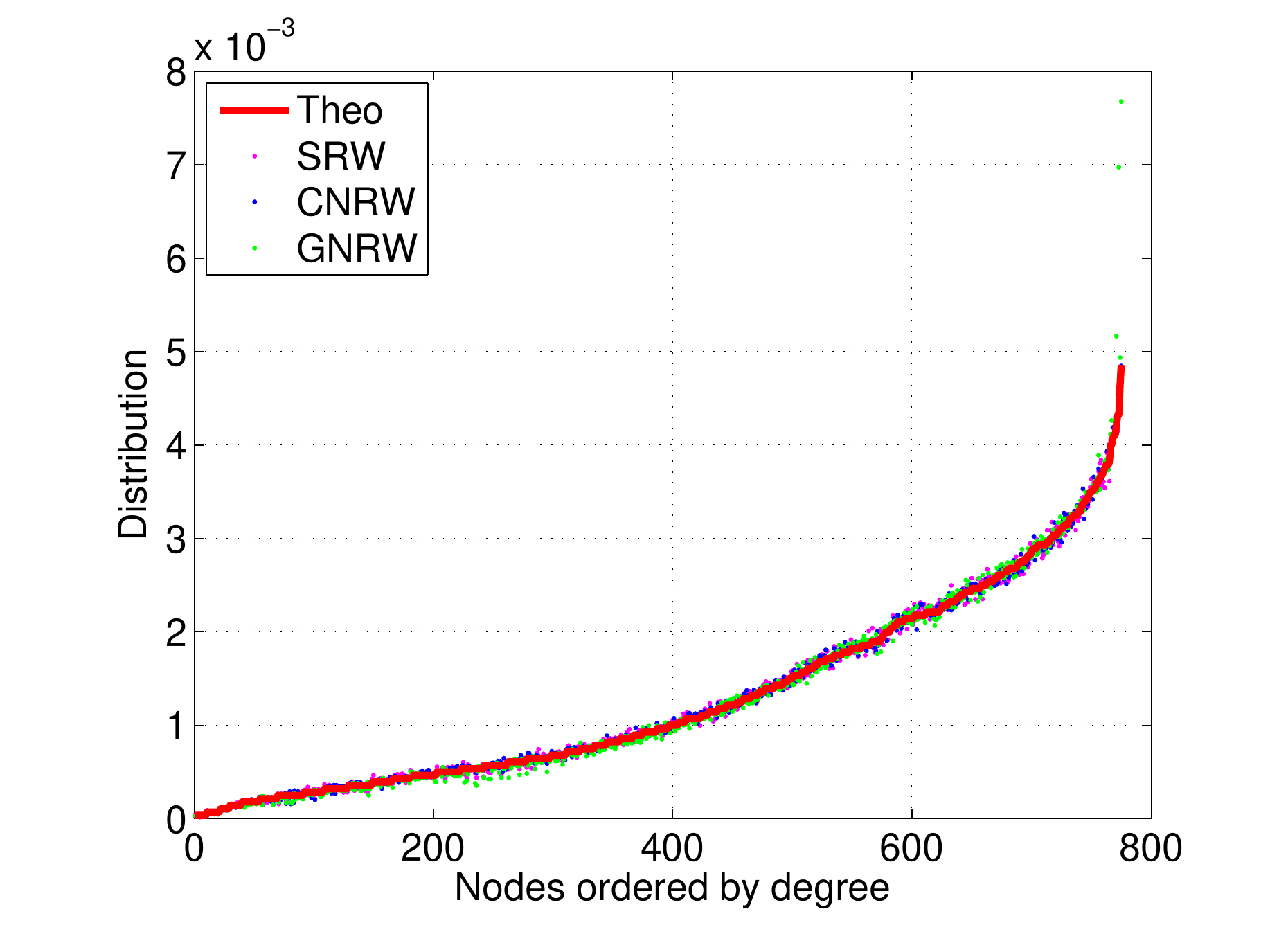}
  \label{fig:distribution-facebook1684}
  }
  \subfigure[facebook dataset 1 (zoomed)]{
  \includegraphics[width=0.23\linewidth]{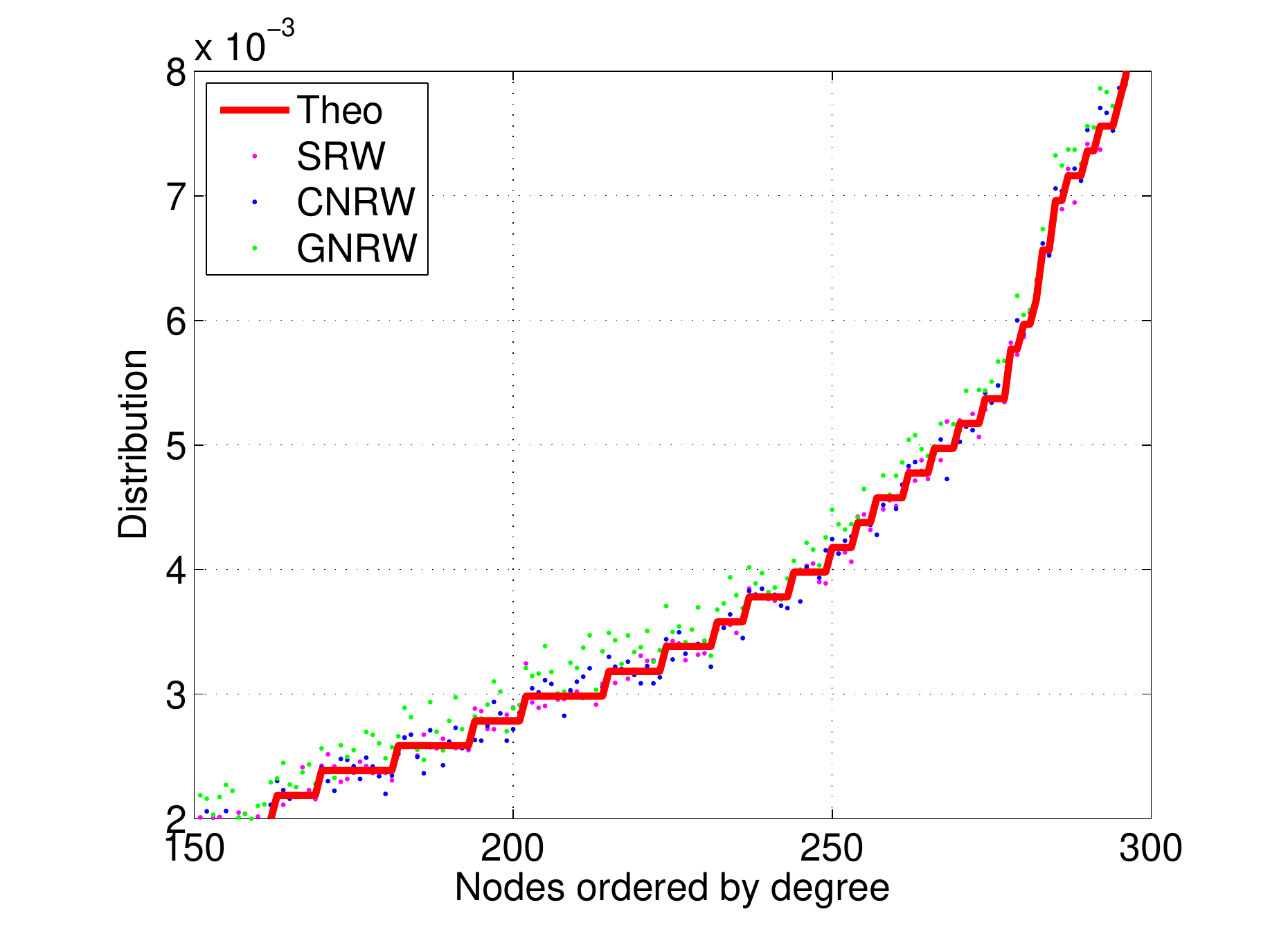}
  \label{fig:distribution-facebook0-zoomed}
  }
  \subfigure[facebook dataset 2 (zoomed)]{
  \includegraphics[width=0.23\linewidth]{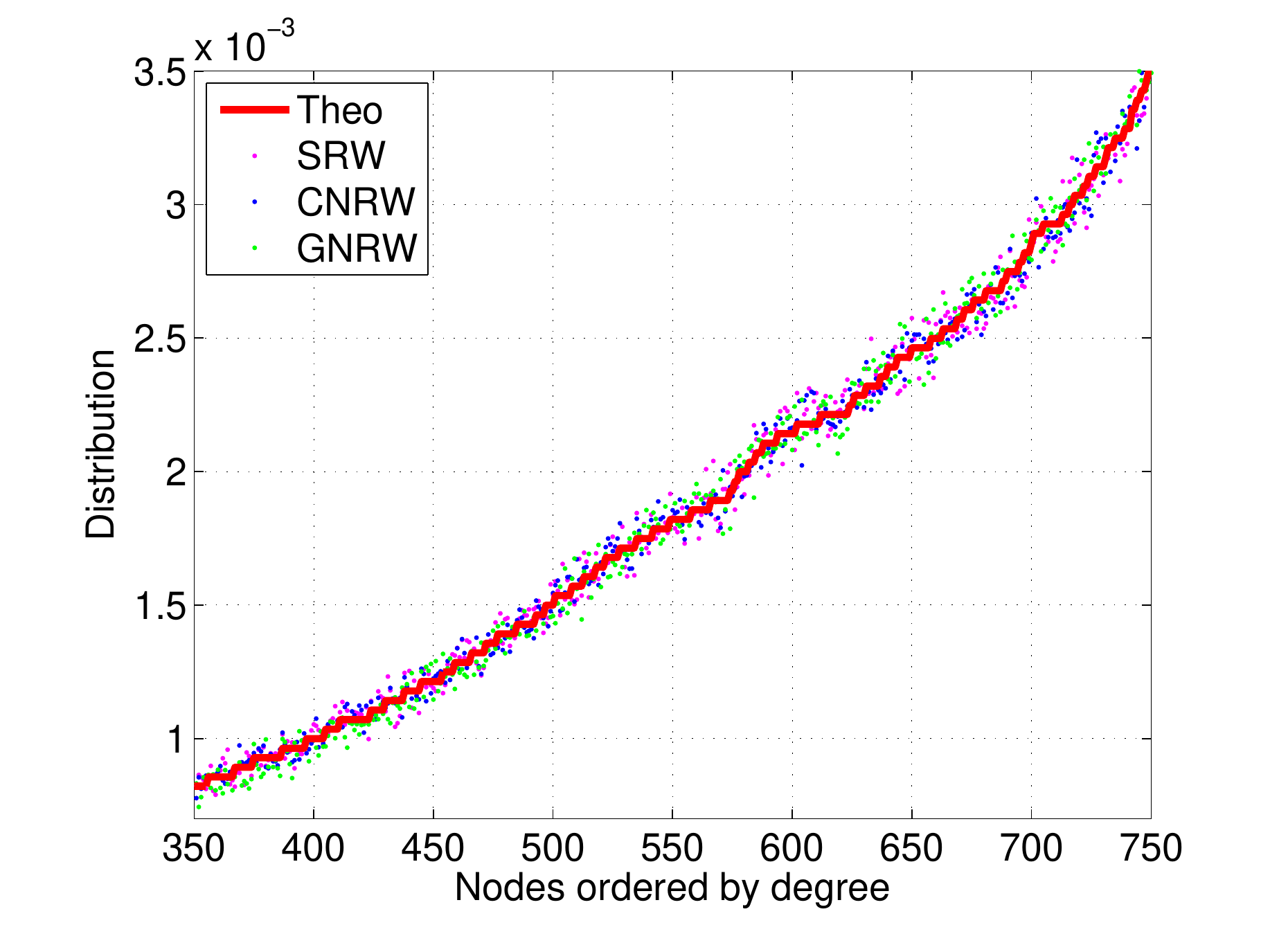}
  \label{fig:distribution-facebook1684-zoomed}
  }
\caption{Distribution of SRW, CNRW and GNRW}
\label{fig:distribution}
\end{figure*}

We start with the experiments that show how SRW, CNRW and GNRW have the same stationary distribution. In Figure~\ref{fig:distribution}, we ran 100 instances of each random walk for 10000 steps in two datasets (both are from the Facebook dataset \cite{stanford_dataset}), and then we used the samples collected from each random walk to calculate the sampling distribution. We ordered the distribution of the nodes by the their degree, and we also included the theoretical distribution of all the nodes (i.e. the red solid line in the figure). One can see that all the three random walks converge to the same stationary distribution.

We then compare of all five algorithms over the Google Plus dataset, with bias measure being the relative error for estimating the average degree. Figure~\ref{fig:gplus} depicts the change of relative error with query cost. One can make two observations from the figure. One, our proposed algorithms CNRW and GNRW significantly outperform the other algorithms. For example, to achieve a relative error of 0.06, CNRW and GNRW only requires a query cost of around 486 and 447, respectively, while SRW requires over 800, NB-SRW requires 795, and MHRW never achieves a relative error under 0.08 after issuing 1000 queries. Second, we can also observe from the figure that MHRW performs much poorer than the other algorithms. Thus, we do not further include MHRW in experimental results in the paper.

With Figure~\ref{fig:gplus} establishing the superiority of our algorithms over the existing ones with the relative error measure, we also confirmed the superiority with the other two measures, KL-divergence and $\ell_2$-norm, this time over the public benchmark dataset. Figure~\ref{fig:local-datasets} depicts the results for Facebook and Youtube. One can observe from Figure~\ref{fig:local-datasets} that our CNRW and GNRW algorithms consistently outperform both SRW and SRW according to all three measures being tested. In addition, GNRW outperforms CNRW, also for all three measures.

To further study the design of GNRW, specifically the criteria for grouping nodes together, we tested GNRW with three different grouping strategies: random grouping (i.e., GNRW-By-MD5, as we group nodes together according to the MD5 of their IDs), grouping by similar degrees (GNRW-By-Degree), and grouping by the value of an attribute ``reviews count'' (GNRW-By-ReviewsCount). Figure~\ref{fig:gnrws-grp} depicts the performance of all three strategies. One can make an interesting observation from the figure: While all three variations significantly outperform the baseline SRW algorithm, the best-performing variation indeed differs when the relative error  is computed from different aggregates. Specifically, when the aggregate is average degree, GNRW-By-Degree performs the best. When the aggregate is average review count, on the other hand, GNRW-By-ReviewsCount performs the best. This verifies our discussions in Section~\ref{subsec:gnrw-idea} - i.e., if the aggregate of interest is known before hand, choosing the grouping strategy in alignment with the aggregate of interest can lead to more accurate aggregate estimations from samples.

\begin{figure}
\centering
  \subfigure[Estimate Average Degree]{
  \includegraphics[width=.22\textwidth]{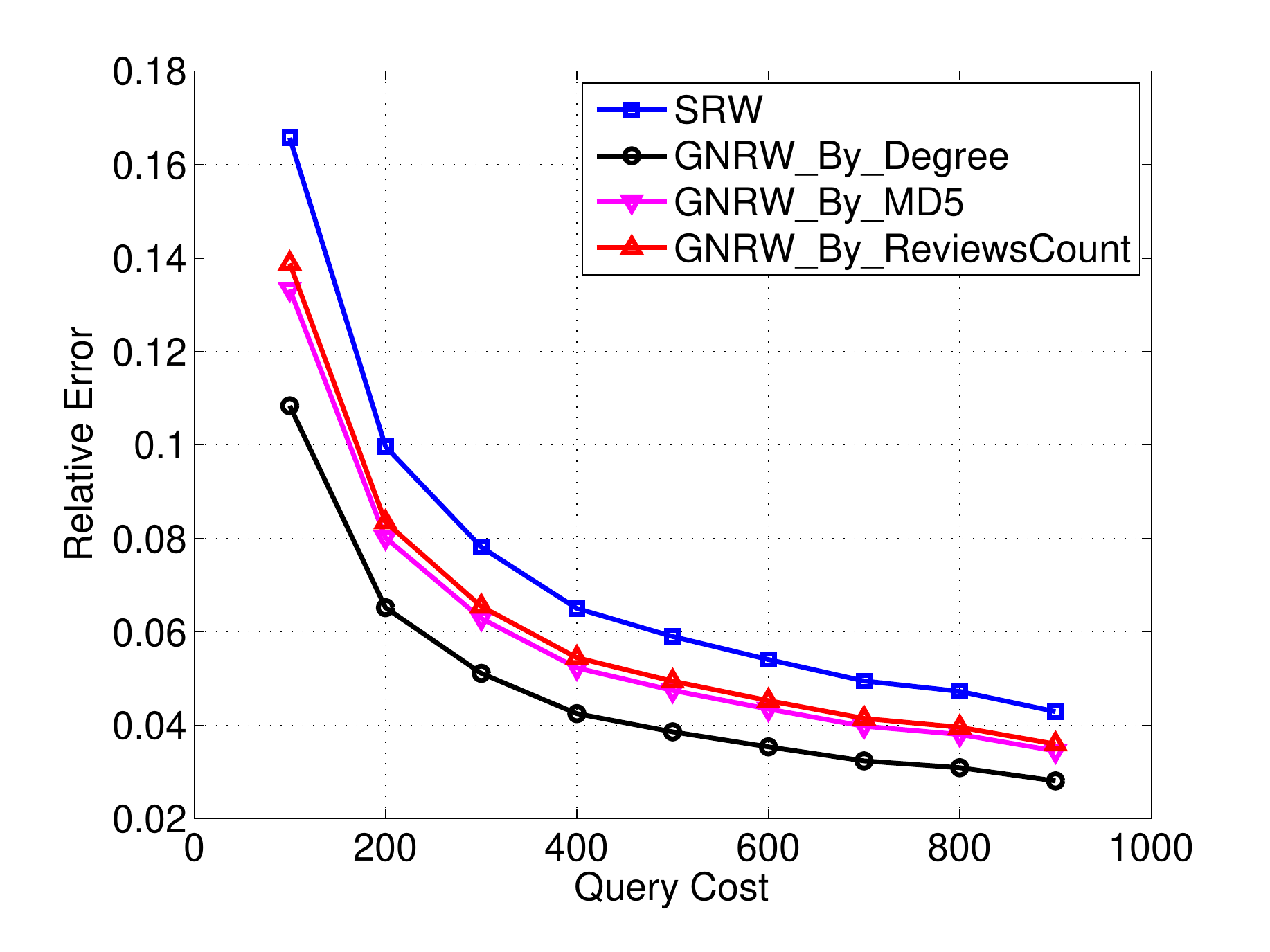}
  \label{fig:yelp-degree}
  }
  \subfigure[Estimate Average Reviews Count]{
  \includegraphics[width=.22\textwidth]{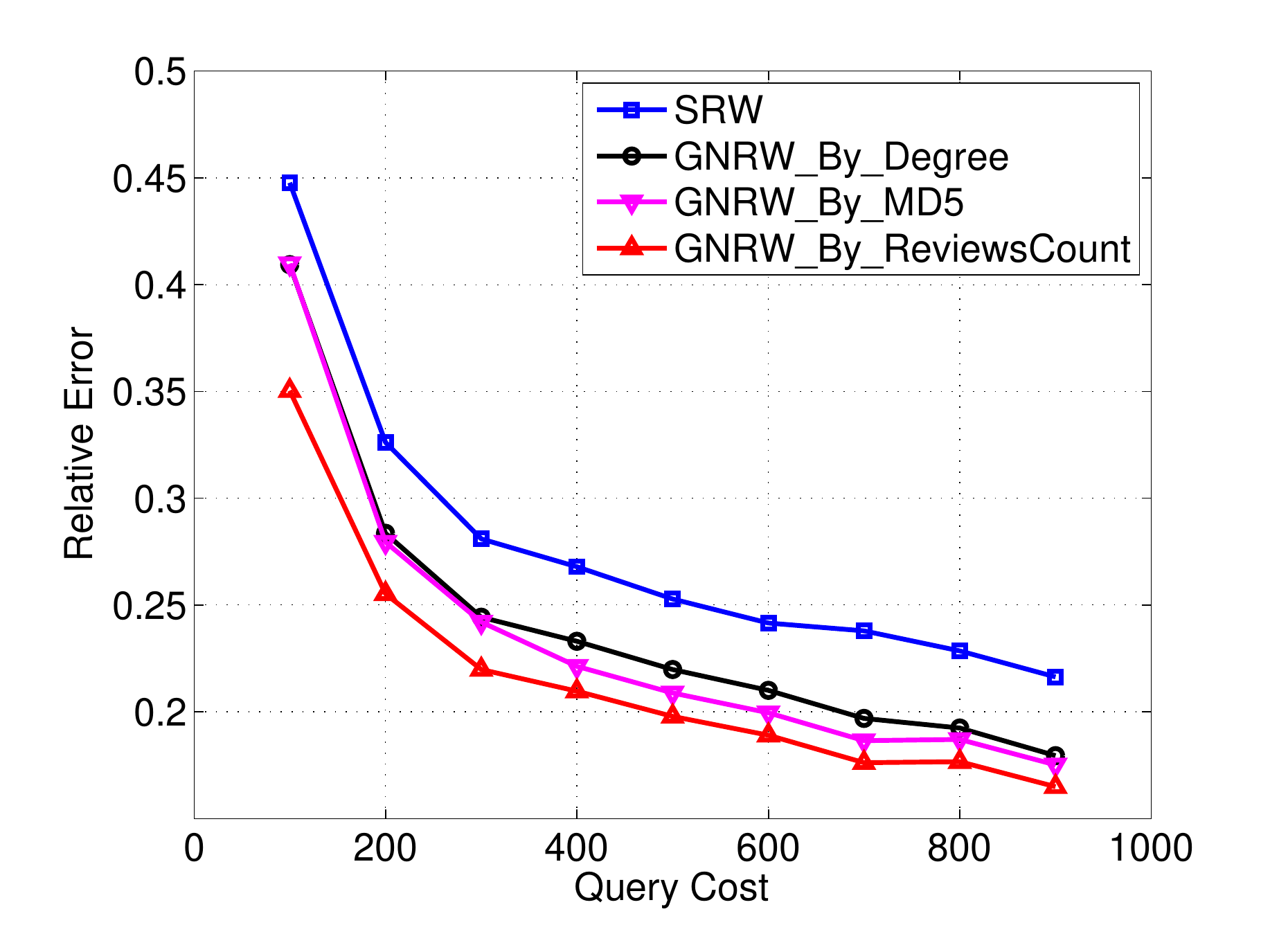}
  \label{fig:yelp-reviewCounts}
  }
\caption{Yelp dataset: GNRW strategies.}
\label{fig:gnrws-grp}
\end{figure}

Finally, we studied the performance of our algorithms over two ``ill formed'' graphs, the clustered graph and a barbell graph. The clustered graph is combined with 3 complete graphs with graph sizes as 10, 30 and 50. We also vary the size of the barbell graph from 20 to 56 nodes, to observe the change of performance of the algorithms with graph size.  The results are shown in Figures~\ref{fig:small-cluster} and \ref{fig:barbell-size}. One can see from the results that, even for these ill-formed graphs, our proposed algorithms consistently outperform SRW and NB-SRW for varying graph sizes, according to all three bias measures being used.

\begin{figure*}
\centering

  \subfigure[KL-divergence]{
  \includegraphics[width=.31\textwidth]{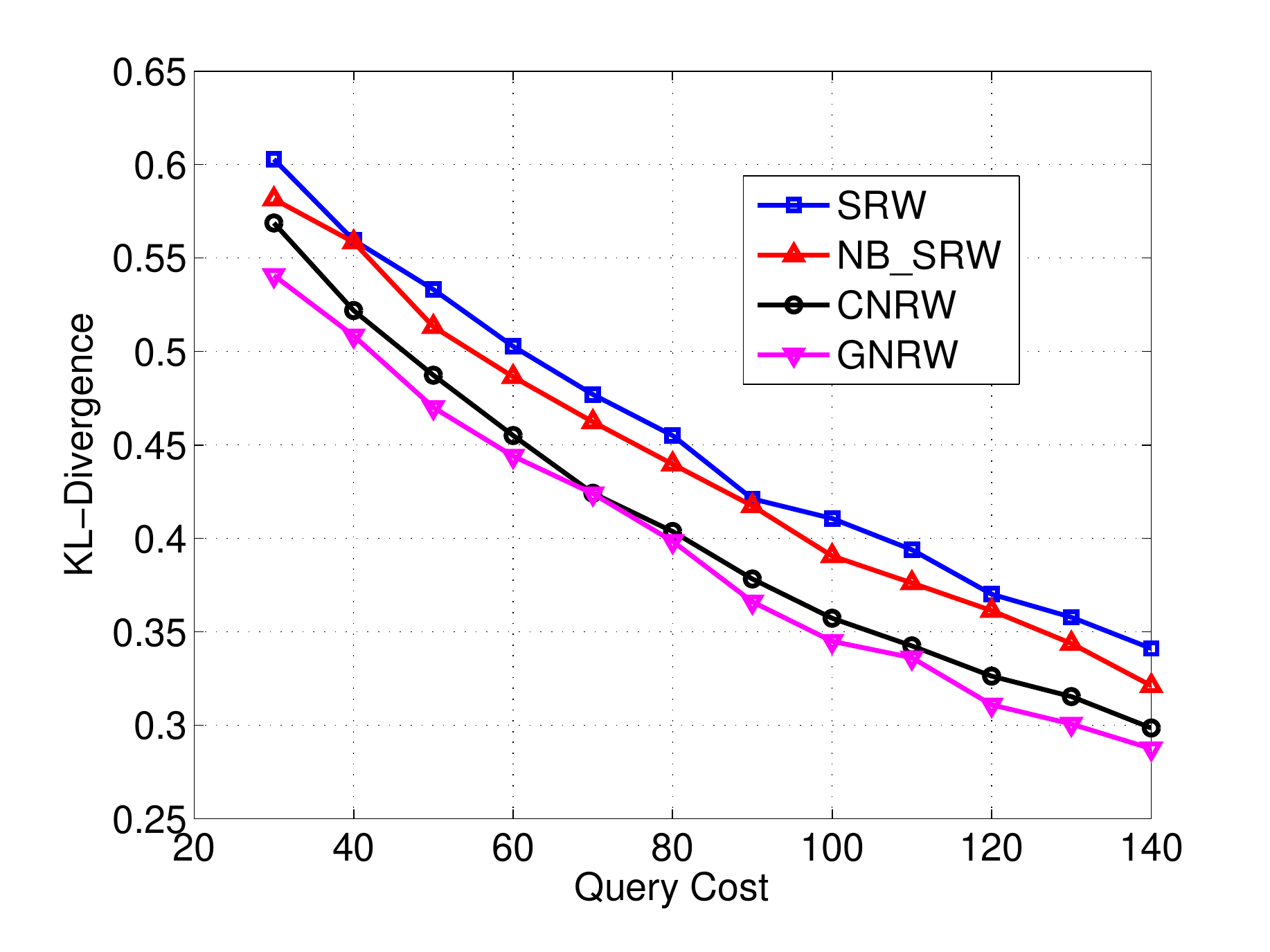}
  \label{fig:cluster_kl}
  }
  \subfigure[$\ell_2$-distance]{
  \includegraphics[width=.31\textwidth]{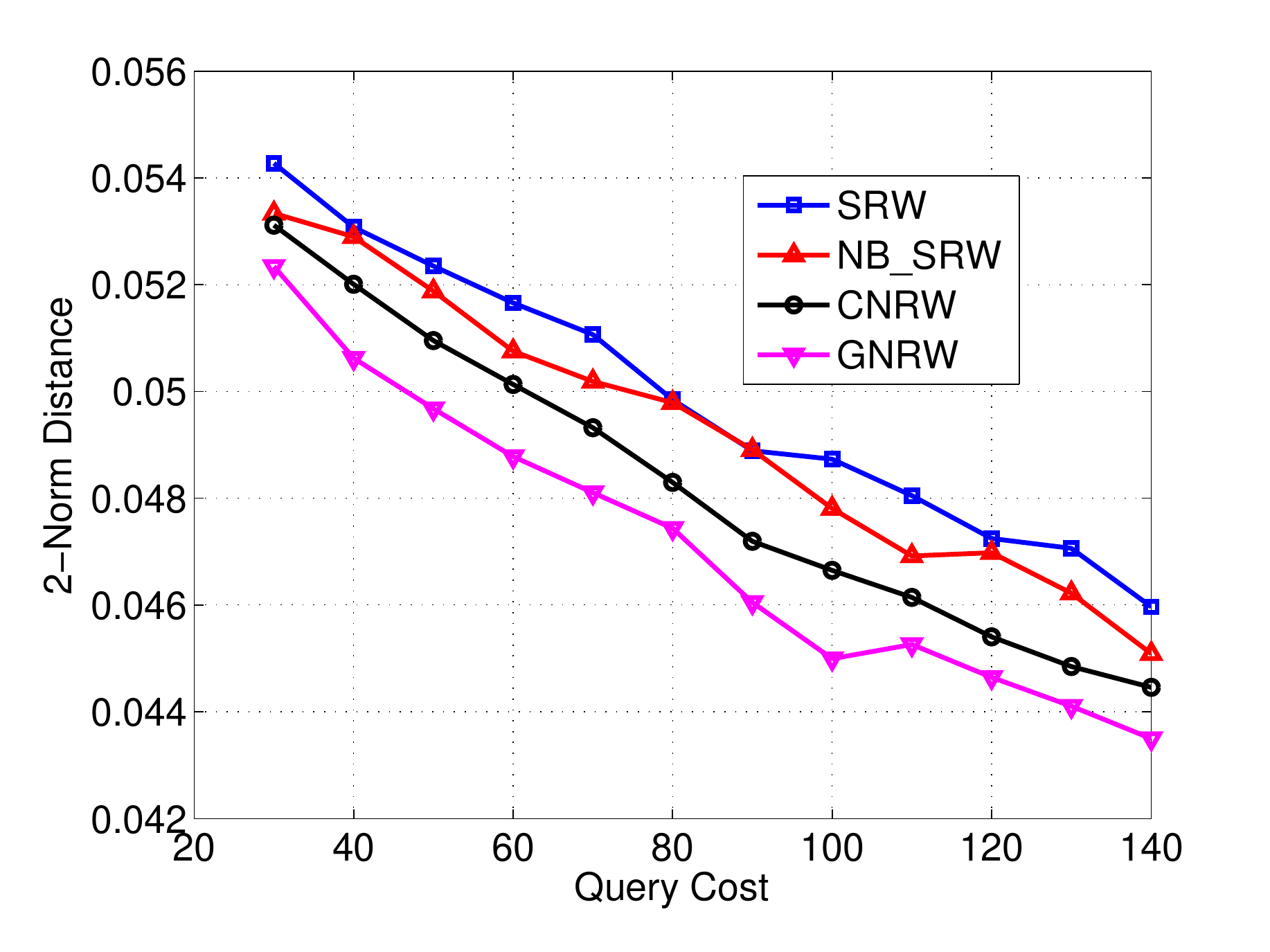}
  \label{fig:cluster_2norm}
  }
  \subfigure[Estimation error]{
  \includegraphics[width=.31\textwidth]{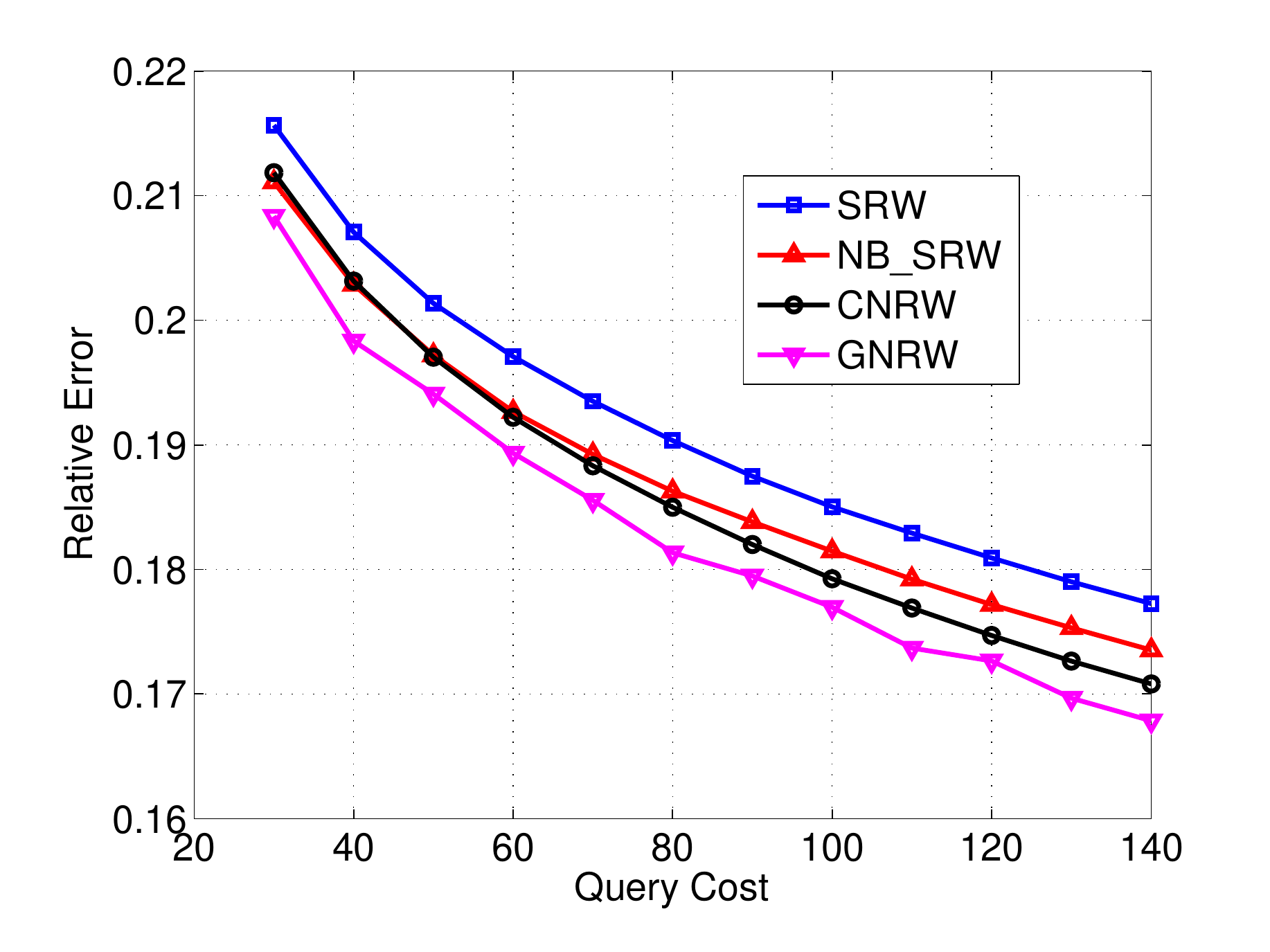}
  \label{fig:cluster_est}
  }
\caption{Synthetic datasets: clustered graph}
\label{fig:small-cluster}
\end{figure*}

\begin{figure*}
\centering
  \subfigure[KL-divergence]{
  \includegraphics[width=.31\textwidth]{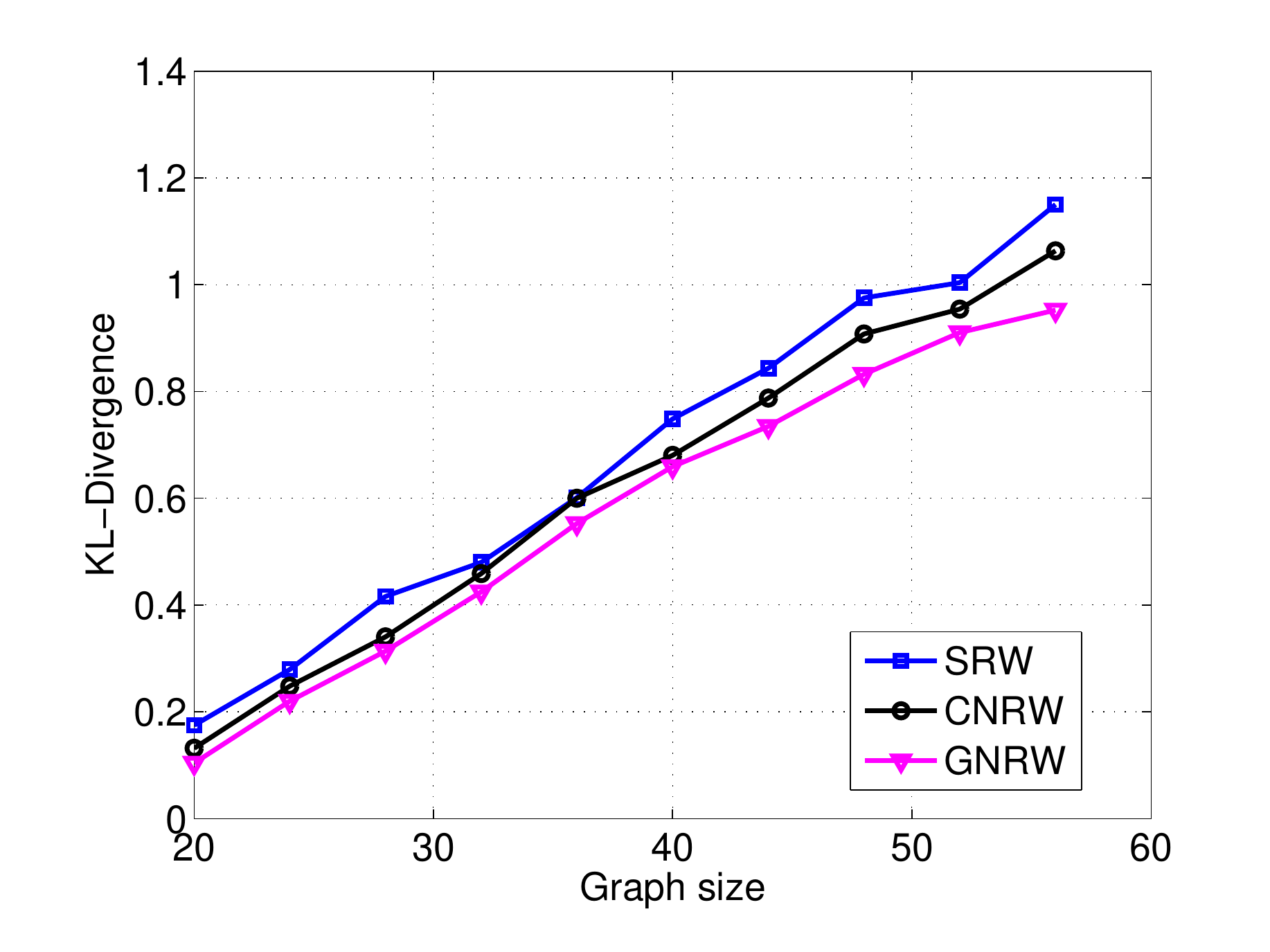}
  \label{fig:barbell_size_kl}
  }
  \subfigure[$\ell_2$-distance]{
  \includegraphics[width=.31\textwidth]{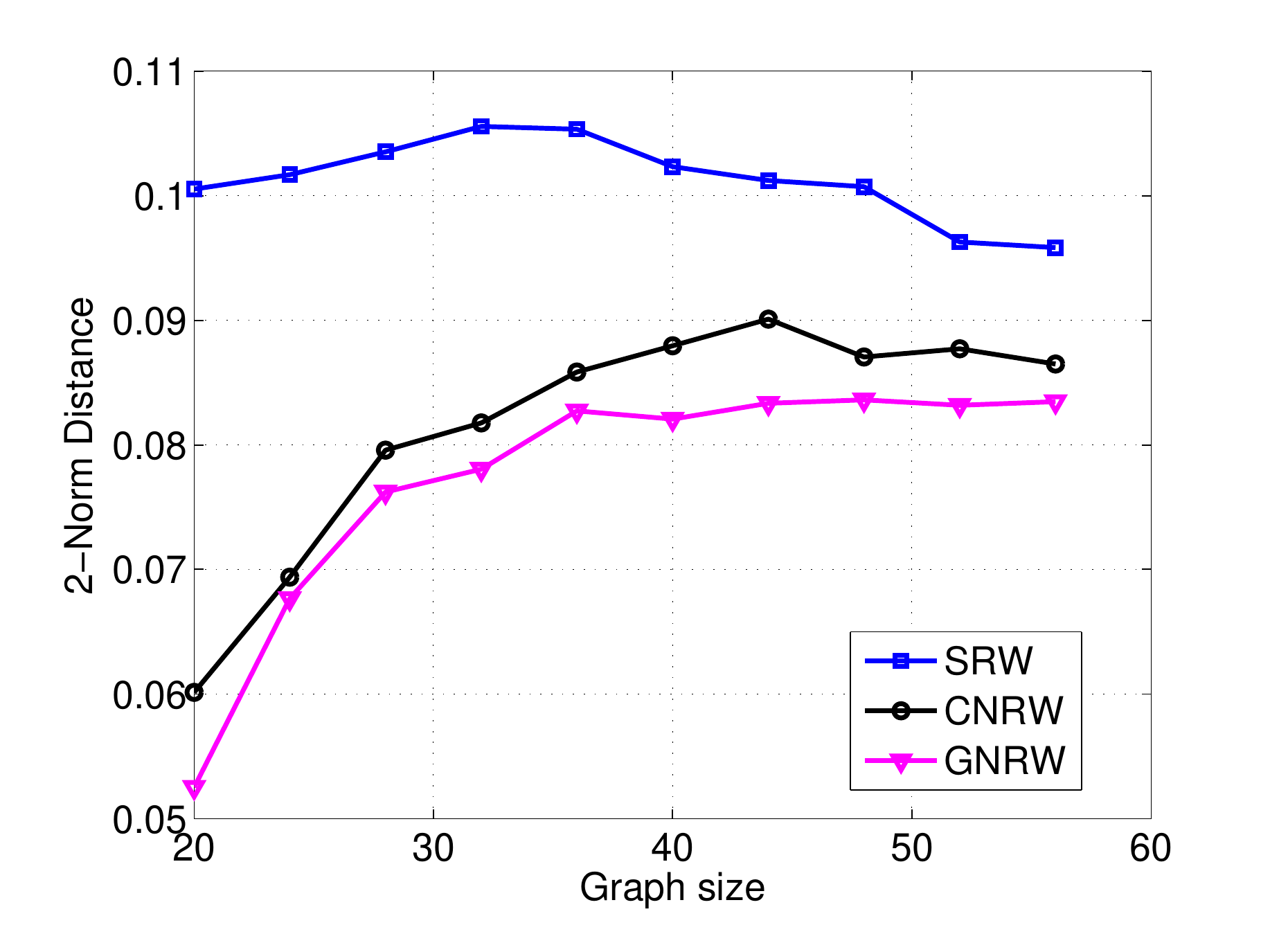}
  \label{fig:barbell_size_2norm}
  }
  \subfigure[Estimation error]{
  \includegraphics[width=.31\textwidth]{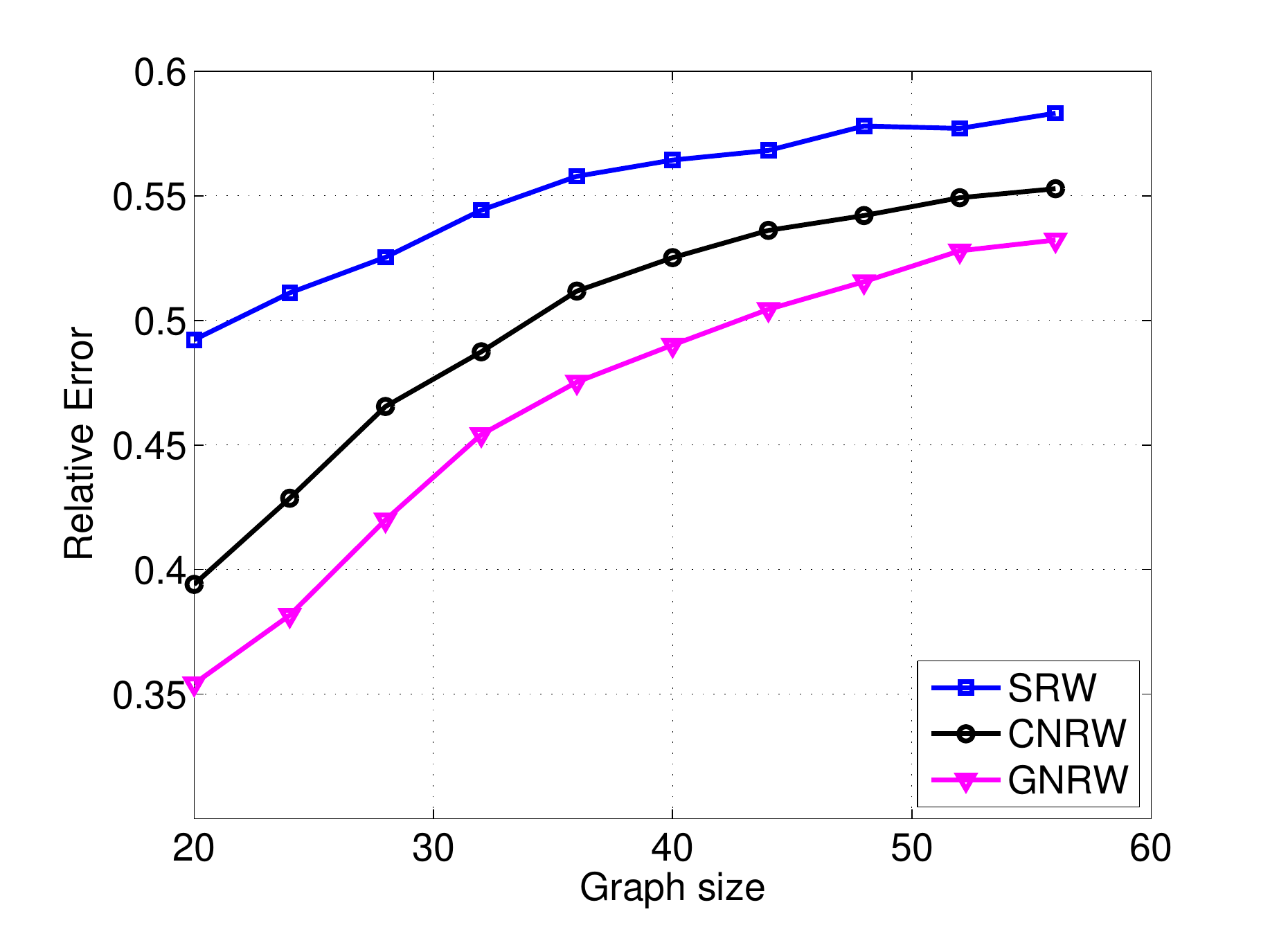}
  \label{fig:barbell_size_est}
  }
\caption{Barbell graph size analytics.}
\label{fig:barbell-size}
\end{figure*}

\section{Related Work}
\label{sec:ref}
\noindent\textbf{Sampling from online social networks.}
Online social networks are different from other data mining resources because of their limited access interface, thus a lot of papers like \cite{Leskovec2006a, Airoldi:2005:SAP:1117454.1117457, Kurant} targeted the challenges of how to efficiently sample from large graphs. 

With some extend of the global topology (e.g. the id range of all the nodes in the graph), \cite{Leskovec2006a} summarized  sampling techniques including random node sampling, random edge sampling and random subgraph sampling in large graphs. \cite{Jin} combines random jump and MHRW together to get efficient uniform samples. \cite{Ribeiro:2010:ESG:1879141.1879192} also demonstrated the frontier sampling that leveraged the advantage of having uniform random initial nodes. Without global topology, \cite{Leskovec2006a} compared sampling techniques such as Simple Random Walk, Metropolis-Hastings Random Walk and traditional Breadth First Search (BFS) and Depth First Search (DFS). \cite{Gjoka2010} confirmed that MHRW is less efficient than SRW because MHRW mixes slower. \cite{58571882} introduced non-backtracking random walk. Also \cite{Alon2008} considered many parallel random walks at the same time. 

Our work extends random walks to higher order MCMCs and systematically consider the historical information by introducing path blocks, which is fundamentally different from existing techniques.

\noindent\textbf{Theoretical analyses of the random walk path blocks}. According to \cite{Neal2004}, the stratification of a random walk's path blocks can affect the asymptotic variance of its estimation. \cite{58571882} also applied \cite{Neal2004}'s theorem to show that non-backtracking random walk is always better than SRW. Our work is based on the construction of the path blocks, and we further discussed about how to design the random walk to make it as a better form of the stratification for the path blocks.

\section{Conclusions}
\label{sec:con}
In this paper, we considered a novel problem of leveraging historic transitions in the design of a higher-ordered MCMC random walk technique, in order to enable more efficient sampling of online social networks that feature restrictive access interfaces. Specifically, we developed two algorithms: (1) CNRW, which replaces the memoryless transition in simple random walk with a memory-based, sampling-without-replacement, transition design, and (2) GNRW, which further considers the observed attribute values of neighboring nodes in the transition design. We proved that while CNRW and GNRW achieve the exact same target (sampling) distribution as traditional simple random walks, they offer provably better (or equal) efficiency no matter what the underlying graph topology is. We also demonstrated the superiority of CNRW and GNRW over baseline and state-of-the-art sampling techniques through experimental studies on multiple real-world online social networks as well as synthetic graphs.

\balance

\bibliographystyle{abbrv}
\bibliography{13_Exp}

\begin{thebibliography}{10}

\bibitem{stanford_dataset}
Stanford large network dataset collection \url{http://snap.stanford.edu/data/}.

\bibitem{Airoldi:2005:SAP:1117454.1117457}
E.~M. Airoldi.
\newblock Sampling algorithms for pure network topologies.
\newblock {\em SIGKDD Explorations}, 7:13--22, 2005.

\bibitem{Alon2008}
N.~Alon, C.~Avin, M.~Koucky, G.~Kozma, Z.~Lotker, and M.~R. Tuttle.
\newblock Many random walks are faster than one.
\newblock In {\em SPAA}, 2008.

\bibitem{burioni2005random}
R.~Burioni and D.~Cassi.
\newblock Random walks on graphs: ideas, techniques and results.
\newblock {\em Journal of Physics A: Mathematical and General}, 38(8):R45,
  2005.

\bibitem{dodge2003oxford}
Y.~Dodge, D.~Cox, D.~Commenges, P.~J. Solomon, S.~Wilson, et~al.
\newblock {\em The Oxford dictionary of statistical terms}.
\newblock Oxford University Press, 2003.

\bibitem{fredman1984storing}
M.~L. Fredman, J.~Koml{\'o}s, and E.~Szemer{\'e}di.
\newblock Storing a sparse table with 0 (1) worst case access time.
\newblock {\em Journal of the ACM (JACM)}, 31(3):538--544, 1984.

\bibitem{Gjoka2010}
M.~Gjoka, M.~Kurant, C.~T. Butts, and A.~Markopoulou.
\newblock Walking in facebook: A case study of unbiased sampling of osns.
\newblock In {\em INFOCOM}, 2010.

\bibitem{hastings1970monte}
W.~K. Hastings.
\newblock Monte carlo sampling methods using markov chains and their
  applications.
\newblock {\em Biometrika}, 57(1):97--109, 1970.

\bibitem{Jin}
L.~Jin, Y.~Chen, P.~Hui, C.~Ding, T.~Wang, A.~V. Vasilakos, B.~Deng, and X.~Li.
\newblock Albatross sampling: robust and effective hybrid vertex sampling for
  social graphs.
\newblock In {\em MobiArch}, 2011.

\bibitem{Kurant}
M.~Kurant, M.~Gjoka, C.~T. Butts, and A.~Markopoulou.
\newblock Walking on a graph with a magnifying glass: stratified sampling via
  weighted random walks.
\newblock In {\em SIGMETRICS}, 2011.

\bibitem{58571882}
C.-H. Lee, X.~Xu, and D.~Y. Eun.
\newblock Beyond random walk and metropolis-hastings samplers: why you should
  not backtrack for unbiased graph sampling.
\newblock In {\em SIGMETRICS}. ACM, 2012.

\bibitem{Leskovec2006a}
J.~Leskovec and C.~Faloutsos.
\newblock Sampling from large graphs.
\newblock In {\em SIGKDD}, 2006.

\bibitem{li2014random}
R.-H. Li, J.~X. Yu, X.~Huang, and H.~Cheng.
\newblock Random-walk domination in large graphs.
\newblock In {\em ICDE}, 2014.

\bibitem{Jul2012}
J.~McAuley and J.~Leskovec.
\newblock {Learning to Discover Social Circles in Ego Networks}.
\newblock In {\em NIPS}, 2012.

\bibitem{navlakha2008graph}
S.~Navlakha, R.~Rastogi, and N.~Shrivastava.
\newblock Graph summarization with bounded error.
\newblock In {\em SIGMOD}, 2008.

\bibitem{Neal2004}
R.~M. Neal.
\newblock Improving asymptotic variance of mcmc estimators: Non-reversible
  chains are better.
\newblock {\em Technical Report}, 2004.

\bibitem{Ribeiro:2010:ESG:1879141.1879192}
B.~Ribeiro and D.~Towsley.
\newblock Estimating and sampling graphs with multidimensional random walks.
\newblock In {\em SIGCOMM}, 2010.

\bibitem{spitzer1964principles}
F.~Spitzer, F.~Spitzer, F.~Spitzer, and A.~Mathematician.
\newblock {\em Principles of random walk}.
\newblock Springer, 1964.

\bibitem{thirumuruganathan2014aggregate}
S.~Thirumuruganathan, N.~Zhang, V.~Hristidis, and G.~Das.
\newblock Aggregate estimation over a microblog platform.
\newblock In {\em SIGMOD}, 2014.

\bibitem{13275946}
X.~Yan, B.~He, F.~Zhu, and J.~Han.
\newblock {Top-K aggregation queries over large networks}.
\newblock In {\em ICDE}, 2010.

\bibitem{yang2015defining}
J.~Yang and J.~Leskovec.
\newblock Defining and evaluating network communities based on ground-truth.
\newblock {\em Knowledge and Information Systems}, 42(1):181--213, 2015.

\bibitem{zhang2014exploration}
N.~Zhang and G.~Das.
\newblock Exploration and mining of web repositories.
\newblock In {\em WSDM}, 2014.

\bibitem{zhou2013faster}
Z.~Zhou, N.~Zhang, Z.~Gong, and G.~Das.
\newblock Faster random walks by rewiring online social networks on-the-fly.
\newblock In {\em ICDE}, 2013.

\end{thebibliography}

\end{document}